\def\copyrightline{%
    \scriptsize
    \vtop{\hsize\textwidth\nobreakspace\par\@Copyright}%
}
\renewcommand{\cref}{\errmessage{In this documentclass "cref" does not work as expected. Use the capitalized version "Cref" or just plain "ref".}}
\renewcommand\theHALG@line{\thealgorithm.\arabic{ALG@line}}
\newenvironment{objectinterface}{\description}{\enddescription}
\newenvironment{properties}{\description}{\enddescription}
\crefname{floatingequation}{equation}{equations}
\Crefname{floatingequation}{Equation}{Equations}
\definecolor{darkgray}{rgb}{0.3, 0.3, 0.3}
\newcommand{\ignore}[1]{}
\newcommand{\Randomness}{Randomness}
\newcommand*\dif{\mathop{}\!\mathrm{d}}
\DeclareMathOperator*{\argmax}{arg\,max}
\newcommand{\Message}[1]{\ensuremath{\langle}#1\ensuremath{\rangle}}
\algnewcommand{\Returns}[1]{\textbf{returns} {#1}}
\algnewcommand{\WaitFor}{\textbf{wait for} }
\algnewcommand{\WaitUntil}{\textbf{wait until} }
\algnewcommand{\Send}[2]{\textbf{send} $\langle$#1$\rangle$ \textbf{to} #2}
\algnewcommand{\IfThenElse}[3]{% \IfThenElse{<if>}{<then>}{<else>}
    \algorithmicif\ #1\ \algorithmicthen\ #2\ \algorithmicelse\ #3}
\algnewcommand{\IfThen}[2]{% \IfThen{<if>}{<then>}
    \algorithmicif\ #1\ \algorithmicthen\ #2}
\algnewcommand{\algspace}{\Statex}
\algnewcommand{\TriggerEvent}[1]{\textbf{trigger event } #1}
\algnewcommand{\Parameters}[1]{\State \textbf{Instance parameters: } #1}
\newcommand{\pRandomness}{Randomness}
\newcommand{\pTermination}{Termination}
\newcommand{\pAgreement}{Agreement}
\newcommand{\pOneProcessRandomness}{One Process Randomness}
\newcommand{\pConsistency}{Consistency}
\newcommand{\pValidity}{Validity}
\newcommand{\pTotality}{Totality}
\newcommand{\pNotificationTotality}{Notification Totality}
\newcommand{\pAcceptValidity}{Accept Validity}
\newcommand{\pAcceptTotality}{Accept Totality}
\newcommand{\pAssignmentTermination}{Assignment Termination}
\newcommand{\pUnpredictability}{Unpredictability}
\newcommand{\pRevealTermination}{Reveal Termination}
\newcommand{\pBindingCommonCore}{Binding Common Core}
\newcommand{\pApproximateEpsConsistency}{Approximate $\epsilon$-Consistency}
\newcommand{\pProbabilistcDeltaConsistency}{Probabilistic $\delta$-Consistency}
\newcommand{\pApproximateAgreementConsistency}{Approximate $\aaEpsilon$-consistency}
\newcommand{\pAVSSValidity}{Validity}
\newcommand{\pRetrieveTermination}{Retrieve Termination}
\newcommand{\pAVSSBinding}{Binding}
\newcommand{\pAVSSSecrecy}{Secrecy}
\newcommand{\pAPVSSSecrecy}{Secrecy}
\DeclareGlobalCommand{\dist}[1]{\ensuremath{d_{#1}}}
\newcommand{\ApproxCoinEpsilon}{\epsilon/f}
\newcommand{\ApproxCoinRounds}{\log_2 f + \log_2(1 / \epsilon)}
\newcommand{\ProbcoinRoundsWithCalibrationWithQ}{5 + \lceil\log_2(1/Q) + \log_2(\log_2(1/Q))\rceil}
\newcommand{\ProbcoinRoundsWithoutCalibrationWithQ}{3 + \lceil\log_2(n) + \log_2(1/Q)\rceil}
\newcommand{\ProbcoinRoundsWithoutCalibrationWithDelta}{3 + \lceil\log_2(n) + \log_2\left(\frac{1}{1-\delta}\right)\rceil}
\newcommand{\ProbcoinRoundsWithCalibrationWithDelta}{5 + \left\lceil\log_2\left(\frac{1}{1 - \delta}\right) + \log_2\left(\log_2\left(\frac{1}{1 - \delta}\right)\right)\right\rceil}
\newcommand{\ProbcoinCalibrationMinimumNWithQ}{\frac{3 \ln(2/Q)}{2}}
\newcommand{\myparagraph}[1]{\subparagraph*{\textbf{#1.}}}
\title{Distributed Randomness from Approximate Agreement}
\date{\today}
\keywords{Asynchronous, approximate agreement, weak common coin, consensus, Byzantine agreement}
\author{Luciano Freitas}{LTCI, T\'el\'ecom Paris, Institut Polytechnique de Paris}{lfreitas@telecom-paris.fr}{}{}
\author{Petr Kuznetsov}{LTCI, T\'el\'ecom Paris, Institut Polytechnique de Paris}{petr.kuznetsov@telecom-paris.fr}{}{}
\author{Andrei Tonkikh}{LTCI, T\'el\'ecom Paris, Institut Polytechnique de Paris}{andrei.tonkikh@telecom-paris.fr}{}{}
\authorrunning{Luciano Freitas, Petr Kuznetsov, Andrei Tonkikh}
\begin{document}

\maketitle

\begin{abstract}
Randomisation is a critical tool in designing distributed systems.
The \emph{common coin} primitive, enabling the system members to agree on an unpredictable random number, has proven to be particularly useful.
We observe, however, that it is impossible to implement a truly random common coin protocol in a fault-prone asynchronous system.

To circumvent this impossibility, we introduce two relaxations of the perfect common coin: (1)~\emph{approximate common coin} generating random numbers that are \emph{close} to each other; and (2)~\emph{Monte Carlo common coin} generating a common random number with an arbitrarily small, but non-zero, probability of failure.
Building atop the \emph{approximate agreement} primitive, we obtain efficient asynchronous implementations of the two abstractions, tolerating up to one third of Byzantine processes.
Our protocols do not assume trusted setup or public key infrastructure and converge to the perfect coin exponentially fast in the protocol running time.

By plugging one of our protocols for \emph{Monte Carlo common coin} in a well-known consensus algorithm, we manage to get a \emph{binary Byzantine agreement} protocol with $O(n^3 \log n)$ communication complexity, resilient against an adaptive adversary, and tolerating the optimal number $f<n/3$ of failures without trusted setup or PKI.
To the best of our knowledge, the best communication complexity for binary Byzantine agreement achieved so far in this setting is $O(n^4)$.
We also show how the \emph{approximate common coin}, combined with a variant of Gray code, can be used to solve an interesting problem of Intersecting Random Subsets, which we introduce in this paper.
\end{abstract}

\section{Introduction}

Generating randomness in distributed systems is an essential part of many protocols, such as Byzantine Agreement~\cite{BenOr83}, Distributed Key Generation~\cite{gennaro1999secure}  or Leader Election~\cite{mostefaoui2001leader}. 
%
%In summary, 
Any application that needs an unpredictable or unbiased result will most likely rely on randomness. 
Although sometimes local sources of randomness are enough for some protocols~\cite{mostefaoui2019new}, having access to a common random number can guarantee faster termination~\cite{aguilera2012correctness}.
Producing a common unpredictable random number has been extensively studied in the literature on cryptography and distributed systems under the names of \emph{random beacon}, distributed (multi-party) \emph{random number generation} or \emph{common coin} (even if the result is not binary). 
In essence, these protocols ensure:

\begin{properties}
    \item[\pTermination:] every correct process eventually outputs some value;

    \item[\pAgreement:] no two correct processes output different values;
    
    \item[\pRandomness:] the value output by a correct process must be uniformly distributed over some domain $\mathcal{D}$, $|\mathcal{D}| \ge 2$.
\end{properties}

We call a protocol that ensures the three properties  
(\emph{\pTermination}, \emph{\pAgreement}, and \emph{\pRandomness}) a \textbf{perfect common coin}. %
There are many message-passing protocols without trusted setup that implement a perfect common coin in the presence of Byzantine adversary \cite{syta2017scalable,cascudo2017scrape,DBLP:journals/corr/abs-1805-04548,cachin2005random,bunz2017proofs,krasnoselskii2020no,de2021randsolomon}.
These protocols are either \emph{synchronous}, meaning that every message sent by a correct process is delivered within a certain (known \emph{a priori}) bound of time, or \emph{partially-synchronous}, meaning that such a bound exists but is unknown.

In contrast, one can also consider an  \emph{asynchronous} system, where no bounds on communication delays can be assumed. 
In a seminal work of Fischer, Lynch, and Paterson, it has been shown that the problem of \emph{consensus} has no  asynchronous fault-tolerant solutions~\cite{FLP85}. 
As we show in \Cref{app:impossibility}, this impossibility also holds for perfect common coins:
no algorithm can implement a perfect common coin in a message-passing asynchronous system where at least one process might crash.
Note that
% there is no simple black-box reduction from consensus to a perfect common coin.
this statement cannot be proven by a simple black-box reduction from consensus to a perfect common coin and a reference to FLP~\cite{FLP85}.
Indeed, if such a reduction existed, the resulting protocol would have to always terminate in a bounded number of steps,
even with unfavourable outputs of the black-box common coin. 
Hence, if we were to replace the common coin protocol by a protocol that always returns $0$, it would still provide termination as well as all other properties of consensus, violating~\cite{FLP85}.

% Protocols that rely on \emph{trusted setup}, such as~\cite{cachin2005random}, typically do not satisfy the {\Randomness} property of a perfect common coin.
% The outputs of such protocols follow deterministically from the information received by the processes during the setup.
%
\atrev{%
Note that this impossibility applies even to systems with \emph{trusted setup}, such as the one assumed in~\cite{cachin2005random}.
Such protocols typically do not satisfy the {\Randomness} property of a perfect common coin.
The outputs of these protocols follow deterministically from the information received by the processes during the setup.}
%These protocols use pre-distributed randomness in a clever way to obtain multiple numbers that are computationally indistinguishable from random (much like a PRNG~\cite{prng}).

In light of the impossibility of a perfect coin, one might look for \emph{relaxed} versions of the common-coin problem that allow asynchronous fault-tolerant solutions.
For example, one can relax the Agreement property by only requiring the output to be common with some constant probability, which results in an abstraction sometimes called 
\emph{weak common coin}\footnote{\atadd{Many }weak common coin protocols such as the one in~\cite{CanettiRabin93} also relax {\pRandomness}.}
or \emph{$\delta$-matching common coin}. 
In this paper, we call this abstraction a \textbf{probabilistic common coin}, in order to avoid confusion with other relaxations we introduce. 
More precisely, probabilistic common coins replace the Agreement property above with \emph{probabilistic $\delta$-consistency}.

\begin{properties}
    \item[\pProbabilistcDeltaConsistency:] 
    with probability at least $\delta$, no two correct processes output different values.
\end{properties}

We also introduce the concept of a \emph{Monte Carlo common coin} which is a probabilistic common coin whose success rate $\delta$ can be parameterized as follows:
%without changing the resilience of the abstraction. 
%
the more rounds of the protocol are executed, the more reliable the outcome is. In our case $\delta$ starts at $\frac{2}{3}$ in the first round of the protocol and converges to $1$ at an exponential rate in the number of rounds.
%
% Notice that, in most probabilistic common coins, $\delta$ can be increased by decreasing the resilience level (the allowed fraction of Byzantine processes).
\atrev{In most probabilistic common coins, $\delta$ could be increased by decreasing the resilience level (the allowed fraction of Byzantine processes).}
\atrev{However, to the best of our knowledge, this paper is the first to present an implementation of a Monte Carlo common coin that can achieve arbitrarily small (but non-zero) $\delta$ by increasing the running time of the protocol (\Cref{sec:probcoin-from-approxcoin,sec:probcoin}).}
%
% \atnote{UPDATE: we now have 2 Monte Carlo coins.}

We also propose a novel, alternative relaxation of Agreement: instead of ensuring that the \emph{same} output is produced (with some probability), we may require that the produced outputs are \emph{close} to each other according to some metric.
For this variant, we have to also slightly relax randomness so that only one correct process is guaranteed to obtain a truly random value.
More precisely, assume a discrete range of possible outputs $\DomainRange$, and let $\dist{q}(x, y)$ denote the distance between $x$ and $y$ in the\atadd{ algebraic} ring $\mathbb{Z}_{q}$\footnote{I.e., $d(x,y) = \min\{|x-y|, q-|x-y|\}.$}. 
The \textbf{Approximate common coin} abstraction then satisfies {\pTermination} and the following two properties:
\begin{properties}
    \item[\pApproximateEpsConsistency:] 
    % the values decided by correct processes are at most $\lceil \epsilon D \rceil$ apart from each other, for some $\epsilon \in (0, 1]$.
    if one correct process outputs value $x$ and another correct process outputs $y$, then $\dist{D}(x,y) \le \lceil \epsilon D \rceil$, for a given parameter $\epsilon \in (0, 1]$;
    
    \item[\pOneProcessRandomness:]
    the value output by \emph{at least one} correct process must be uniformly distributed over the domain $\DomainRange$.\footnote{With a small modification to our protocol, we can easily achieve $(f{+}1)$-Process Randomness instead on {\pOneProcessRandomness}. However, we do not know if it possible to guarantee that all outputs of correct processes are random without relaxing other properties.}
\end{properties}

% A protocol that provides \emph{\pTermination}, \emph{\pOneProcessRandomness} and \emph{\pApproximateEpsConsistency} is called an \textbf{approximate common coin}.
%
%As far as we know, this is the first time such an abstraction is discussed. 
%
%In Section~\ref{sec:approxcoin}, we discuss Approximate common coin in detail and propose its implementation.

Our implementations of Monte Carlo and Approximate common coins build upon the abstraction of  \emph{Approximate Agreement}~\cite{DLPSW86}.
It appears that the abstraction perfectly matches the requirements exposed by our relaxed common-coin definitions:
it naturally grasps the notion of outputs being close where the precision can be related to the execution time.  
Building atop existing asynchronous Byzantine fault-tolerant implementations~\cite{DLPSW86,abraham2004optimal}, 
we introduce and discuss an efficient implementation of the \emph{bundled\atremove{ binary}} version of this abstraction which is, intuitively, equivalent to $n$ parallel instances of Approximate Agreement\atremove{ with binary ($0$ or $1$) inputs}\atadd{, but is much more efficient}.

We discuss two applications of our protocols.
First, we observe that our Monte Carlo\atadd{ common} coin can be plugged into\atreplace{ any protocol that solves Byzantine agreement using a probabilistic common coin}{ many existing Byzantine agreement protocols}~\cite{bracha-brb,CanettiRabin93,crain2020algorithms,mostefaoui2015signature}.
This helps us to obtain a \emph{binary Byzantine agreement} protocol with $O(n^3 \lambda \log n)$ communication complexity, where $\lambda$ is the security parameter.
\atrev{The protocol exhibits optimal resilience of $f<n/3$, tolerates adaptive adversary, and assumes no trusted setup or PKI.}
In this \atreplace{model}{setting}, the best prior protocols for binary Byzantine agreement we are aware of have communication complexity of $O(n^4\lambda)$~\cite{kogias2020adkg,abraham2021reaching}.
% \atadd{ or sub-optimal resilience~\cite{cohen2020coincidence}}. % and require PKI.}

We also introduce \emph{Intersecting Random Subsets}, a new problem that can be used to asynchronously choose random committees with large intersections.
Using elements of coding theory, namely Gray Codes~\cite{gray-codes,coding-theory-textbook}, we show how our Approximate common coin can be used to solve this problem without additional communication overhead.

We present our model definitions in \Cref{sec:system} and describe the building blocks used in our constructions in \Cref{sec:building}. %
We describe our protocols in \Cref{sec:approxcoin,sec:probcoin-from-approxcoin,sec:probcoin}, including implementation details and complexity analysis.
In~\Cref{sec:applications}, we describe applications of our abstractions: binary Byzantine agreement and Intersecting Random Subsets.
In~\Cref{sec:related}, we overview the related work and   in~\Cref{sec:conclusion}, we conclude the paper. 

\pkrev{%
For completeness, we delegate all necessary complementary material to the appendix. 
In~\Cref{app:impossibility}, we prove the impossibility of an asynchronous perfect common coin.
In~\Cref{app:approxcoin} and~\Cref{proof:prob}, we prove correctness of our approximate and Monte Carlo common coins, respectively.
\Cref{sec:rsd-impl} presents two implementations of Random Secret Draw, one of the major building blocks of our common coins. 
\Cref{app:cr93} gives a \atreplace{modern}{modular} formulation of a common coin proposed by Canetti and Rabin in 1993~\cite{CanettiRabin93}. 
Finally, we show how to build the codewords necessary for our Intersecting Random Subsets \atreplace{applications}{application} in~\Cref{app:committee-elections-concrete-code}.}

\section{System Model}
\label{sec:system}
We consider a system of $n$ processes able to communicate using reliable communication channels. Among the participants, \pkadd{at most} $f < \frac{n}{3}$ are Byzantine and might display arbitrary behaviour.
%[[PK redundant?
%such as sending conflicting messages to other processes or pretending to not have received some information.
%]]

% The adversarial model we consider is \emph{adaptive}, that is, the $f$ Byzantine processes are not pre-determined before the execution: they are chosen during the protocol run.
%
\pkrev{We assume the \emph{adaptive} adversarial model: up to $f$ Byzantine processes are chosen by the adversary depending on the execution.
A non-Byzantine process is called \emph{correct}.
The communication complexity of our baseline protocols can be improved by a factor of $n$ using  Aggregatable Publicly Verifiable Secret Sharing (APVSS)~\cite{gurkan2021aggregatable}\atremove{ (instead of $n$ instances of Asynchronous Verifiable Secret Sharing~\cite{avss-cachin-2002,quadratic-brb})}.
However, as we are not aware of APVSS implementations that are secure against the adaptive adversary, the improved protocols can only be proved correct in the presence of the static one.}
%

% \pkremove{The Byzantine processes are under the control of the adversary and cannot ever become correct again after being corrupted. \pknote{Sounds strange? Are there models in which Byz processes recover?}
% }
%
The adversary can control the time the messages sent by correct processes take to arrive, as well as reorder them.
However, it cannot drop a message sent by a correct process unless it corrupts this process before the message has arrived.

% We assume that each process has access to a local random number generator that is capable of producing values uniformly chosen from any finite set of numbers that can be accessed as the following:
We assume that each process has access to a local random number generator that can be accessed as follows:

\begin{objectinterface}
    \item $\RandomInt(\Domain)$: produces a uniformly distributed random integer number in the range $\DomainRange$.
\end{objectinterface}

% The system should also implement a Public Key Infrastructure (PKI), allowing the processes to exchange digital signatures and to verify them in order to identify that a message comes from a certain process and to relay information received from other processes. 
% %
% We shall use this primitive \pkrev{via the following interface}:

% \begin{objectinterface}
%     \item[$\Sign_i(x)$]: can only be invoked by process $i$, returning a signature \emph{sig};
%     \item[$\VerifySignature(i,x,\mathit{sig})$]: returns \emph{true} iff \emph{sig} was produced by an invocation of $\Sign_i(x)$ by process $i$.
% \end{objectinterface}

The proposed protocols as well as some of the building blocks rely on the use of cryptographic hash functions. The hash of an arbitrary string $s$ is denoted $H(s)$\atremove{ in the text} and has length $\lambda$ \pkrev{that we call the \emph{security parameter}}. It is computationally infeasible to find two strings $s \neq s'$ such that $H(s) = H(s')$, as well as inverting a hash without knowing which input was used a priori.

We assume a computationally bounded adversary, so that it is incapable of breaking cryptographic primitives with all but negligible probability.
%
% However, since such a probability exists, we allow \atreplace{all protocols of all our abstractions}{all our protocols as well as all building blocks} to \atreplace{be violated with a negligible in $\lambda$ probability.}
\atrev{However, since such a probability exists, we allow the properties of all our protocols as well as all building blocks to be violated with a negligible in $\lambda$ probability.}
 
\section{Building Blocks}
\label{sec:building}

Our protocols make use of a wide range of building blocks.
None of them are completely new, but some of them are modified according to our needs.
In particular, we introduce the \emph{Random Secret Draw} abstraction inspired by the ideas from~\cite{CanettiRabin93} and~\cite{abraham2021reaching} (\Cref{sec:rsd,sec:rsd-impl}). We also provide a
% \emph{verifiable} version of the \emph{Proposal Election} abstraction~\cite{abraham2021reaching} (\Cref{sec:verifiable-proposal-election})
% and a 
\emph{bundled} version of the \emph{Approximate Agreement}~\cite{DLPSW86,abraham2004optimal} abstraction (\Cref{sec:aa}).
In addition, we use \emph{Byzantine Reliable Broadcast}~\cite{bracha-brb,quadratic-brb} (\Cref{sec:brb}), \emph{Asynchronous Verifiable Secret Sharing}~\cite{avss-cachin-2002,quadratic-brb} (\Cref{sec:avss}), \emph{Aggregatable Publicly Verifiable Secret Sharing}~\cite{gurkan2021aggregatable} (\Cref{subsubsec:apvss}), 
% and \emph{Verifiable Gather}~\cite{abraham2021reaching} (\Cref{sec:gather}).
\atrev{and \emph{Gather}~\cite{CanettiRabin93,gather-blog} (\Cref{sec:gather})}.

\subsection{Byzantine Reliable Broadcast} \label{sec:brb}
% It is often necessary for a process to spread a certain information across the system, and this can be achieved by broadcasting messages. Each message has size $|M|$ and is identified by its contents $m$, the identity $p_i, i \in [1,\dots,n]$ of its sender. In this paper, we need two types of broadcast: the first one is \emph{Byzantine Consistent Broadcast} (BCB) which guarantees \emph{Validity}, \emph{Consistency} and \emph{Integrity}; and the second one is \emph{Byzantine Reliable Broadcast} (BRB), which provides all of the previous properties and also \emph{Totality}.

\emph{Byzantine Reliable Broadcast (BRB)}~\cite{bracha-brb} allows a designated leader to communicate a single message to all processes in such a way that, if any correct process delivers a message, then every other correct process eventually delivers exactly the same message (even if the leader is Byzantine).
More precisely, a BRB protocol must satisfy the following properties:
\begin{properties}
    % \item[Validity:] If a correct process $p_i$ sends a message $m$ with sequence number $s$, then all correct processes eventually deliver $m$ with sequence number $s$;
    % \item[Validity:] If the leader is correct and it invokes operation $\Broadcast(m)$, then every correct process will eventually receive event $\Deliver(m)$ with the same message $m$;
    \item[\pValidity:] if the leader is correct and it broadcasts message $m$, then every correct process will eventually deliver $m$;

    % \item[Consistency:] If two correct processes $p_j$ and $p_k$ deliver messages $m_j$ and $m_k$ with sequence number $s$ for the same process $p_i$, then $m_j = m_k$.
    \item[\pConsistency:] if two correct processes $j$ and $k$ deliver messages $m_j$ and $m_k$, then $m_j = m_k$.
    
    % \item[Totality:] If a correct process $p_j$ delivers a message $m$ with sequence number $s$ from process $p_i$, then eventually all correct processes some message with sequence number $s$ from $p_i$.
    \item[\pTotality:] if a correct process $j$ delivers some message $m$, then eventually all correct processes will deliver $m$.

    % \item[Integrity:] All correct processes deliver at most one message with sequence number $s$ from any given process $p_i$.
\end{properties}

% \emph{Validated Byzantine Reliable Broadcast (VBRB)} is a more general variant of BRB that additionally guarantees that the delivered messages satisfy a predefined \emph{validity predicate} $\Validity$.
% %
% Of course, this is only possible under the assumption that a correct leader can only broadcast a valid message.
% %
% We usually do not explicitly specify the validity predicate in the pseudocode as it can be easily deduced just by referring to the place where VBRB is called by correct processes.
% %
% Compared to BRB, VBRB provides one additional property:
% \begin{properties}
%     \item[\pExternalValidity:] if a correct process delivers message $m$, then $\Validity(m) = \True$.
% \end{properties}
% %
% Note that VBRB can be implemented on top of BRB simply by checking the validity of the message before delivering it.

The performance of reliable broadcast is of crucial importance to\atremove{ all} our protocols.
We believe that the BRB implementation recently proposed by Das, Xiang, and Ren~\cite{quadratic-brb} will be the most suitable option.
It has total communication complexity of just $O(n |M| + n^2 \lambda)$, where $|M|$ is the size of the message and $\lambda$ is the 
% size of the output of a hash function
security parameter, total message complexity of $O(n^2)$, and the latency of $3$ message delays in case of a correct leader and $4$ message delays in case of a Byzantine leader.

In this paper, we always use BRB\atremove{ and VBRB} in groups of $n$ instances, with each process being the leader of one.
We use the following notation:
\begin{objectinterface}
    % \item[$\BRB_i.\Broadcast(m)$:] allows process $i$ to execute $\BRB$.;
    % \item[$\textrm{(V)BRB}_i.\Broadcast(m)$:] allows process $i$ to broadcast a message in an instance of (V)BRB where $i$ is the leader;
    \item[$\BRB_i.\Broadcast(m)$:] \atrev{allows process $i$ to broadcast a message in an instance of BRB where $i$ is the leader;}
    
    % \item[$\BRB_i.\Deliver(m)$:] an event that eventually is triggered on all non-faulty processes guaranteeing all $\BRB$ properties. 
    % \item[$\textrm{(V)BRB}_i.\Deliver(m)$:] an event indicating that message $m$ from process $i$ has been delivered.
    \item[$\BRB_i.\Deliver(m)$:] \atrev{an event indicating that message $m$ from process $i$ has been delivered.}
\end{objectinterface}

\subsection{Asynchronous Verifiable Secret Sharing} \label{sec:avss}

% \emph{Asynchronous Verifiable Secret Sharing (AVSS)}~\cite{avss-cachin-2002} allows a process to securely share information with other participants\atremove{, to generate a proof that it has correctly done so} and to keep its contents secret until the moment a threshold of participants require it to be opened. 
\emph{Asynchronous Verifiable Secret Sharing (AVSS)}~\cite{avss-cachin-2002} allows a process to securely share information with other participants and to keep its contents secret until the moment a threshold of participants agree to open it.

In our protocols, AVSS is used with the following interface:

\begin{objectinterface}
    % \item[$\AVSS_i.\ShareSecret(x)$:] allows process $i$\atremove{, and only this process,} to share a secret $x$ among participants, returning a proof $\pi$ that $x$ can be retrieved;
    \item[$\AVSS_i.\ShareSecret(x)$:] allows process $i$ to share a secret $x$ among the participants;
    
    \item[$\AVSS_i.\SharingComplete()$:] an event issued when a secret is correctly shared by process $i$;
    
    % \item[$\AVSS_i.\VerifyCompletion(\pi)$:] returns $\True$ if $\pi$ is a valid proof that the secret has been correctly shared by $i$, and $\False$ otherwise;

    \item[$\AVSS_i.\EnableRetrieve()$:] enables responses to retrieval requests;

    % \item[$\AVSS_i.\Retrieve()$:] returns $x$ if process $i$ executed $\AVSS_i.\ShareSecret(x)$ and at least $2f+1$ correct processes invoked $\AVSS_i.\EnableRetrieve()$.
    \item[$\AVSS_i.\Retrieve()$:] returns $x$ if it was previously shared and all correct processes invoked $\AVSS_i.\EnableRetrieve()$.
\end{objectinterface}

An AVSS implementation must satisfy the following properties:
\begin{properties}
    \item[\pAVSSValidity:] if a correct process $i$ invokes $\AVSS_i.\ShareSecret(x)$, then every correct process eventually receives the $\AVSS_i.\SharingComplete()$ event and no value other than $x$ can be returned from the $\AVSS_i.\Retrieve()$ operation invoked by a correct process;

    \item[\pNotificationTotality:] if one correct process receives the $\AVSS_i.\SharingComplete()$ event, then every correct process eventually receives it;

    \item[\pRetrieveTermination:] if all correct processes invoke $\AVSS_i.\EnableRetrieve()$ and any correct process invokes $\AVSS_i.\Retrieve()$, then this operation will eventually terminate and the process will obtain the shared secret;

    \item[\pAVSSBinding:] if some correct process receives the $\AVSS_i.\SharingComplete()$ notification, then there exists a fixed secret $x$ such that no value other than $x$ can be returned from the $\AVSS_i.\Retrieve()$ operation invoked by a correct process;
    
    \item[\pAVSSSecrecy:] if process $i$ is correct and no correct process invoked $\AVSS_i.\EnableRetrieve()$, then the adversary has no information about the secret shared by $i$.
\end{properties}

% Das, Xiang, and Ren~\cite{quadratic-brb} also proposed an {\AVSS} protocol with the same complexity as \emph{BRB} for {\AVSS.\ShareSecret}, that is $O(n^2 \lambda)$ communication complexity, $O(n^2)$ message complexity and the latency of $3$ message delays in case of a correct sharer and $4$ message delays in case of a faulty sharer. 
Das, Xiang, and Ren~\cite{quadratic-brb} proposed an {\AVSS} protocol with quadratic communication complexity, constant latency, and without assuming trusted setup.
Notice that in order to secretly share a long string $s$, it is better to follow the method proposed in~\cite{krawczyk1993secret}: encrypt $s$ using a much shorter secret key \emph{sym}, reliably broadcast the encrypted value $\{s\}_{\mathit{sym}}$
% (with the total communication cost of $O(n|s| + n^2 \lambda)$)
and then perform secret sharing of the key \emph{sym}.
% (with the communication cost of $O(n^2\lambda)$.
Thus, we shall assume that the total communication complexity of secret sharing of string $s$ is $O(n|s| + n^2\lambda)$.

% Concerning $\AVSS.\Retrieve$, once process $i$ wants to open a secret, it sends a request to all processes for their shares.
%processes reply to $p_i$ their shares, provided they have executed $\EnableRetrieve$, otherwise they will wait until this method is called to send this reply.
% }
% This function has therefore communication and message complexity $O(n^2)$ and latency $2$.
%
% \atremove{On $\AVSS_j.\EnableRetrieve()$, the process sends its share to all other processes and on $\AVSS_j.\Retrieve()$ it waits for $n-f$ such messages and uses them to open the secret.
% Hence, a correct process should complete $\AVSS_j.\Retrieve()$ at most 1 message delay after all correct processes invoked $\AVSS_j.\EnableRetrieve()$.}

%\subsection{Verifiable Random Secret Draw}
\subsection{Random Secret Draw}
\label{sec:rsd}

One of the key ideas of the weak common coin protocol of Canetti and Rabin~\cite{CanettiRabin93}
is to \emph{assign} each process a random number in a given domain $\DomainRange$ in such a way that:
\begin{properties}
    \item[\pAssignmentTermination:] if a correct process $i$ participates, then it is eventually assigned a value. Moreover, everyone will eventually receive a notification that $i$ has been assigned a random value;
    
    \item[\pNotificationTotality:] if process $i$ receives a notification that some process $j$ has been assigned a value, then every correct process will eventually receive such a notification;

    \item[\pRandomness:] the assigned numbers are independent and distributed uniformly over the domain $\DomainRange$.
    % given domain $\Domain$.
    %% Version 1:
    The distribution of the value assigned to process $j$ cannot be affected by the adversary even if $j$ itself is Byzantine;
    %% Version 2:
    % The distribution of the value assigned to process $j$ cannot be affected neither by $j$ itself nor by any group of $f$ or fewer processes;

    % \item[\pUnpredictability:] until at least one correct process agrees to reveal the value assigned to process $j$, the value remains secret, even to process $j$ itself;
    \item[\pUnpredictability:] until at least one correct process agrees to reveal the assigned values, the value assigned to each process $j$ remains secret, even to process $j$ itself;
    
    % \item[\pRevealTermination:] if some process is assigned a value and at least $f+1$ correct processes want to reveal this value, then they will eventually succeed.
    \item[\pRevealTermination:] if all correct processes want to reveal the assigned values, then they will eventually succeed.
\end{properties}

Although this idea has been widely used as part of the implementation of asynchronous consensus protocols, to the best of our knowledge, it was never considered a separate primitive and assigned a name.
Hence, we shall call it \emph{Random Secret Draw (RSD)}.
% to signify that each process can, but is not obliged to, draw a random secret.

This abstraction resembles a well known concept of a Verifiable Random Function (VRF)~\cite{vrf}.
However, the important difference is that process $j$ itself cannot know the value it is assigned until the reveal phase.
Hence, a Byzantine process cannot choose whether it wants to participate or not based on the random value it is assigned.
Moreover, unlike Random Secret Draw, VRF schemes typically require a seed chosen at random \emph{after} the process chose the public key for its  pseudo-random function.
In fact, a variant of RSD has been recently used to generate such seeds~\cite{gao2021efficient}.

We use the following interface for the RSD abstraction:

\begin{objectinterface}
    \item[$\RSD.\StartRSD()$:] allows a process to start participating in RSD and, eventually, to be assigned a random number. We assume that this function is non-blocking, i.e., that an invocation of this function terminates after $0$ message delays;
    
    \item[$\RSD.\EnableRetrieve():$] used by the processes to start participating in the process of re\-cons\-truc\-ting the assigned values;
    
    \item[$\RSD.\RetrieveValues(S)$:] returns a map from the ids of processes in the set $S$ to the assigned values if all correct processes\atremove[They can invoke it after]{ \pkadd{previously}} invoked $\RSD.\EnableRetrieve()$ and all processes in $S$ have been assigned some values. 
    % Note that $\RSD.\RetrieveValues$ does not imply $\RSD.\EnableRetrieve()$.
\end{objectinterface}

The original RSD implementation by Canetti and Rabin~\cite{CanettiRabin93} used $n^2$ instances of AVSS.
To the best of our knowledge, to this day, there is no known AVSS protocol that would allow to do it with less than $\Omega(n^4)$ bits of communication in total.
We, therefore, give two possible implementations. The first one is secure against an adaptive adversary and does not rely on PKI, while the second one uses the implementation from~\cite{abraham2021reaching} that relies of \emph{Aggregatable Publicly Verifiable Secret Sharing} (described in \Cref{subsubsec:apvss}) instead of AVSS.
\atrev{While saving a linear factor in communication complexity, this solution lacks security against adaptive adversary and requires PKI.}
%
% As the RSD implementation in~\cite{abraham2021reaching} has not been explicitly defined as an instance of RSD  and was only implemented as part of their Proposal Election protocol, we provide the pseudocode in \Cref{sec:rsd-impl}.
%
\atrev{Since both~\cite{CanettiRabin93} and~\cite{abraham2021reaching} did not considered RSD as a separate abstraction and did not provide separate pseudocode for it, we present both RSD implementations in \Cref{sec:rsd-impl}.}

\subsection{Gather}
\label{sec:gather}

\DeclareSectionVariable*[GatherSet]{\mathcal{S}}

Yet another important contribution made by Canetti and Rabin in their weak common coin construction~\cite{CanettiRabin93} is a multi-broadcast protocol that has been recently given the name \emph{Gather}~\cite{gather-blog,abraham2021reaching}.
In this protocol, every process starts by broadcasting a single message through Byzantine Reliable Broadcast.
The processes then do a few more rounds of message exchanges and, in the end, each participant $i$ outputs a set of process ids $\GatherSet_i$ such that for all $j \in \GatherSet_i$: $i$ has received the message of $j$ through reliable broadcast.\footnote{In~\cite{gather-blog} and~\cite{abraham2021reaching}, Gather returns a set of pairs $(id, \vvalue)$. However, for our purposes, working with sets of ids is more convenient. The values will be delivered through normal $\BRB.\Deliver$ event.}
Moreover, the sets output by correct processes satisfy a strong intersection property:

\begin{properties}
    \item[\pBindingCommonCore:] There exists a set $\CommonCore$ of process ids of size at least $n-f$, called the \emph{common core}, such that for every correct process $i$: $\CommonCore \subseteq \GatherSet_i$. Moreover, once the first correct process outputs, $\CommonCore$ is fixed and the adversary cannot manipulate it anymore.
\end{properties}

The fact that the adversary cannot affect the \atreplace{core set}{common core} once a single correct process outputs will be important in our protocols.
%
% The adversary should not be able to choose the common core based on the generated random numbers, once the correct processes invoke $\RetrieveValues$ after executing Gather.
\atrev{The adversary should not be able to choose the common core based on the generated random numbers after some of the correct processes invoked $\EnableRetrieve$.}

% In~\cite{abraham2021reaching}, \emph{Verifiable Gather} was proposed.
% When a process receives a set of ids from the verifiable gather, it also receives a proof that this set was indeed produced by the Gather and, most importantly, contains the common core.

We slightly generalize the interface of Gather by using it in conjunction with BRB, but also with other similar primitives (in particular, AVSS and RSD) and their combinations.
When a process invokes $\Gather$, it passes to it an arbitrary callable function $\GatherAccept$ that takes a process id $j$ and returns $\True$ if the message from this process is considered to be delivered (not necessarily through BRB).
% We assume that Gather and Verifiable Gather export the following interface:
\atrev{We assume that Gather exports the following interface}:

\begin{objectinterface}
    \item[$\Gather.\StartGather(\GatherAccept)$:] allows a process to start participating in the Gather protocol;
    % \item[$\Gather.\DeliverSet(S, \pproof)$:] indicates that set of process ids $S$  has been \emph{gathered} \lfadd{(i.e. $\forall i \in S: \GatherAccept(i)=\True$)} and provides a proof $\pproof$ that $S$ contains the common core (only when Verifiable Gather is used).
    % \item[$\Gather.\DeliverSet(S)$:] \atrev{indicates that set of process ids $S$  has been \emph{gathered} (i.e. $\forall i \in S: \GatherAccept(i)=\True$).}
    \item[$\Gather.\DeliverSet(S)$:] \atrev{provides the output of the Gather protocol.}
\end{objectinterface}

In order for the protocol to terminate, the $\GatherAccept$ function has to satisfy properties similar to those of reliable broadcast.
\begin{properties}
    \item[\pAcceptValidity:] if a correct process $i$ invoked $\Gather.\StartGather$, then for every correct process $j$, $\GatherAccept(i)$ invoked by process $j$ must eventually return $\True$\atadd{. Moreover, for all $i$, once $\GatherAccept(i)$ returned to $\True$ to some correct process, it must never switch back to $\False$};
    
    % \item[\pAcceptValidity:] \atrev{if all correct process invoked $\Gather.\StartGather$, then there exists a set $S$ of processes such that $|S| \ge n-f$ and $\forall i \in S: \GatherAccept(i)$ eventually returns $\True$ when invoked by a correct process;}

    \item[\pAcceptTotality:] if $\GatherAccept(i)$ invoked\atadd{ by} one correct process returned $\True$, then eventually it must return $\True$ \atreplace{for}{to} all correct processes.
\end{properties}

Thanks to the properties of AVSS and Random Secret Draw (in particular, to the {\pNotificationTotality} property), in our protocols, this assumption is trivially satisfied.
For Gather, we use the original protocol of~\cite{CanettiRabin93} (\atadd{to the best of our knowledge, it was first }described as a separate primitive in~\cite{gather-blog}).

\subsection{Bundled Approximate Agreement}
\label{sec:aa}

% \atnote{Shall we make it the first building block since we consider it to be the most important?}

% \emph{Approximate Agreement (AA)} is a decision problem that was first studied in asynchronous message passing systems in~\cite{FEKETE199495}.
The last building block that we shall need is \emph{Approximate Agreement (AA)}~\cite{approx-agreement-1994}.
%. It is a decision problem that was first studied in asynchronous message passing systems in~\cite{approx-agreement-1994}.
In a (one-dimensional) AA instance, the processes propose inputs and produce outputs (real values) so that the following properties are satisfied:
\begin{properties}
    % \item[\pValidity:] the output of non-faulty processes must be in the range of inputs of correct processes.
    \item[\pValidity:] the outputs of correct processes must be in the range of inputs of correct processes.

    \item[\pApproximateAgreementConsistency:] the values decided by non-faulty processes must be at most a distance $\aaEpsilon$ apart from each other.\footnote{We use $\aaEpsilon$ to distinguish it from $\epsilon$ used in the definition of Approximate Common Coin.}

    \item[\pTermination:] every non-faulty process eventually decides.
\end{properties}

In our algorithms, Approximate Agreement is always executed in \emph{bundles} of $n$ parallel instances.
For the sake of efficiency, one can treat it as a bundled problem with an input vector of size $n$, corresponding to the different instances and then, for every message, send information about all instances at the same time, but treat them separately as before. 
We call this abstraction \emph{Bundled Approximate Agreement} ($\BAA$).
BAA \textbf{should not be confused} with \emph{Multidimensional Approximate Agreement}~\cite{herlihy2013multidimensional}, which is a 
% different
stronger abstraction than the one we rely upon.

Assuming binary inputs, the processes access BAA via the following interface:
\begin{objectinterface}
    % \item[$\BAA.\RunBAA({[x_1,x_2,\dots,x_N], \epsilon})$:] Runs $n$ instances of Approximate Agreement protocol, where the input for the $i$-th instance is $x_i$. The protocol is run enough rounds to satisfy $\epsilon$-approximation and it returns a vector of the outputs $[y_1,y_2,\dots,y_N]$.
    \item[$\BAA.\RunBAA({[x_1,x_2,\dots,x_N]})$:] Launches $n$ instances of Approximate Agreement protocol, where the input for the $i$-th instance is $x_i$. %
    For a given parameter $\aaEpsilon$, the protocol is executed until $\aaEpsilon$-approximation is satisfied in every instance and then returns a vector of outputs $[y_1,y_2,\dots,y_N]$.
\end{objectinterface}

% For implementing $\BAA$, we suggest using the Approximate Agreement protocol proposed by \cite{abraham2004optimal} with resilience $f < \frac{n}{3}$ based on the aforementioned \emph{BRB}. This means that this abstraction has communication complexity \atreplace{$O(n^2\lambda)$}{$O(n^3\lambda)$}, \atremove{message complexity \atreplace{$O(n^2)$}{$O(n^3)$}} and $4\cdot\log{\frac{1}{\epsilon}}$ latency.
\atrev{For implementing $\BAA$, we suggest using the Approximate Agreement protocol proposed in \cite{abraham2004optimal} with resilience $f < \frac{n}{3}$.
Since in our protocols, the inputs are either $0$ or $1$, we do not need the termination detection techniques described in~\cite{abraham2004optimal} neither do we need the ``init'' phase of the protocol.
With the aforementioned \emph{BRB} and some trivial changes\footnote{\atadd{In the `report' messages, hashes of the values should be sent instead of the values themselves.}}, this implementation will give us the communication complexity of $O(n^3\lambda)$ and latency $4\cdot\log_2{(1/\aaEpsilon)}$.}
% \lfadd{We give the complete pseudocode in \Cref{sec:baa}.}

% \begin{itemize}
%     \item Since our inputs are either $0$ or $1$, there is no need for the initialization procedure that estimates the range of the inputs, nor the part of the code that detects the termination of the algorithm.
%     \item Instead of sending only one value, we reliably broadcast all $n$ values we need in our abstraction.
%     \item In the report messages, instead of echoing values, we echo the hash of a value.
% \end{itemize}

\section{Approximate Common Coin}
\label{sec:approxcoin}

\begin{algorithm}[!htb]
    \begin{smartalgorithmic}[1]
        \DeclareLocalVariable*[avssProof]{\pi_{\text{AVSS}}}
    
        \Parameters{domain $\Domain$, precision $\epsilon$}

        \algspace
        \DistributedObjects
            \State $\forall j \in \AllProc: \AVSS_j$ -- instance of Asynchronous Verifiable Secret Sharing with leader $j$
            \State $\Gather$ -- instance of Gather
            \State $\BAA$ -- instance of Bundled Approximate Agreement with precision $\aaEpsilon = \ApproxCoinEpsilon$ 
        \EndDistributedObjects
        
        \algspace
        \Function{\GatherAccept}{$j$}
            % \State \Return $\True$ if received $\AVSS_j.\SharingComplete()$ notification
            \State \Return $\True$ if received $\AVSS_j.\SharingComplete()$ \label{alg:approx:valid}
        \EndFunction

        \algspace
        \Operation{\Toss}{} \Returns{\Integer}
            \State $x = \RandomInt(\bigDomain)$ \label{line:approx:localrand}
            \State $\AVSS_i.\ShareSecret(x)$ \label{line:approx:share}
            \State $\Gather.\StartGather(\GatherAccept)$ \label{line:approx:gather}
            
            \algspace
            \State \WaitFor $\Gather.\DeliverSet(S)$ \label{line:approx:deliver}
            
             \State $\forall j \in \AllProc:$ 
                let $w_j = \begin{cases}
                    1, & j \in S,\\
                    0, & otherwise \\
                \end{cases}$ \label{line:approx:initw}
            \State $[w_1', \dots, w_n'] = \BAA.\RunBAA([w_1,\dots,w_N])$ \label{line:approx:aa}
            
           \State $\forall j \in \AllProc: \AVSS_j.\EnableRetrieve()$ \Comment{only after BAA completes} \label{line:approx:enable}
            \State $ \forall j \in \AllProc$: let $x_j = \begin{cases}
                   \AVSS_j.\Retrieve(), & if~w_j' \neq 0\\
                    0, & otherwise \\
                \end{cases}$ \label{line:approx:retrieve}

            \algspace

            \LineComment{Compute and return the final random number}
            \State \Return $\left(\left\lceil \sum_{j \in \AllProc} x_j \cdot w_j' \right\rceil  \bmod \bigDomain\right)$ \label{line:approx:output}
         \EndOperation
    \end{smartalgorithmic}

    \caption{Approximate common coin}
    \label{alg:approxcoin}
\end{algorithm}

% The main idea of this protocol is to locally aggregate numbers generated by enough different processes so that at least one of them is correct and, assuming a good aggregation function, the resulting aggregated number will be distributed uniformly at random. 
%
% An example of a good aggregation function is addition modulo the size of the domain.
%
\atrev{The main idea of this protocol is to aggregate numbers locally generated by enough different processes so that at least one of them is correct and the number it generated is truly random and uniformly distributed.
With a good aggregation function, the resulting value will also be uniformly distributed.
An example of such an aggregation function is addition modulo the size of the domain.}
Indeed, it is easy to see that, if $x$ is uniformly distributed over $[0..q{-}1]$ and $y$ is any number chosen independently of $x$, then $(x+y) \bmod q$ will also be uniformly distributed over $[0..q{-}1]$.
Another example of an aggregation operation that satisfies a similar property is bit-wise xor (as long  as the domain size is a power of 2).

However, without being able to solve consensus, we cannot just elect $f+1$ or $n-f$ processes from whom we shall take these values.
Thankfully, unlike xor, addition has one more useful property: it is continuous.
If we take two numbers, $x$ and $y$, such that $\dist{q}(x,y) \le \alpha$, 
then for any number $z$, $\dist{q}(z+x,z+y) \le \alpha$.\footnote{Recall that $\dist{q}(x,y)$ is the distance between $x$ and $y$ in the ring $\mathbb{Z}_q$, i.e., $\dist{q}(x,y)=\min\{|x-y|,q-|x-y|\}$.}
Hence, a natural idea is to \emph{approximately} elect the set of processes to provide the random inputs.

More precisely, in order to produce an \emph{approximate common coin} in the range $\DomainRange$, each process locally generates\atadd{ and secret-shares} a random number in \atreplace{the same}{this} range.
Then each process gathers a set of ids of processes\atreplace{, which}{that} have completed the sharing (line~\ref{line:approx:deliver}).

The next step is to create a binary vector with $n$ positions, where each position $j$ is set to $1$ if and only if $j$ is present in the gathered set
(line~\ref{line:approx:initw}).
This vector is then used as an input for the {\BAA} protocol (line~\ref{line:approx:aa}),
which outputs a vector $W$ of \emph{weights} such that for each position $j$, the weights received by different processes are at most $\aaEpsilon$ apart.

The value of $\aaEpsilon = \ApproxCoinEpsilon$ is chosen such that the final outputs of the coin are at most $\epsilon$ apart from each other.
For the details on how this particular \atreplace{formula}{value} was computed, see \Cref{sec:approxcoin:consistency}.

% Notice that {\BAA} does not need the processes to prove that they propose correct inputs, as this abstraction already ensures that the output values only depend on the inputs of non-faulty processes. 
% \atrev{Note that the processes do not need to prove that their inputs to {\BAA} are valid, as this abstraction already ensures that the output values lie within the range of of inputs of correct processes.}
%
% Moreover, a value is retrievable iff at least one correct process has received a share of it, therefore if the $j$-th value is irretrievable, the $j$-th component will always be assigned weight $0$. 
%
Recall that BAA ensures that the output values lie within the range of of inputs of correct processes.
Moreover, by the properties of Gather and AVSS, if at least one correct process has $j$ in its gathered set, then $j$ has correctly shared its value and it can be later retrieved by the correct processes.
Therefore if the $j$-th value is irretrievable, the $j$-th component will always be assigned weight $0$.
%
% On the other hand, because of the gather common core, at least $n-f$ values will have weight $1$, which, as as demonstrated in \Cref{sec:approxcoin:randomness}, guarantees that the  result is uniformly distributed in the desired range.
On the other hand, due to the common core property of Gather, at least $n-f$ values will have weight $1$, which, as demonstrated in \Cref{sec:approxcoin:randomness}, guarantees that the  result is uniformly distributed in the desired range.

% In the next step, the processes begin revealing their secrets by first enabling retrieval (line~\ref{line:approx:enable}) and then retrieving values of all processes that were not ``discarded'' by approximate agreement, setting missing values to zero (line~\ref{line:approx:retrieve}).
Finally, the processes reveal the secrets (lines~\ref{line:approx:enable} and~\ref{line:approx:retrieve}) and compute the resulting random number. 

\begin{theorem} \Cref{alg:approxcoin} implements an approximate common coin.
    \begin{proof}
        \Cref{app:approxcoin} shows that the algorithm guarantees \emph{Termination}, \emph{One process randomness} and \emph{Approximate $\epsilon$-consistency}.
    \end{proof}
\end{theorem}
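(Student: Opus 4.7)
The plan is to establish the three properties of an approximate common coin -- \pTermination, \pOneProcessRandomness, and \pApproximateEpsConsistency{} -- by leveraging the properties of the building blocks AVSS, Gather, and BAA.

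For \pTermination, I would chain the termination guarantees of the building blocks. Since every correct $i$ invokes $\AVSS_i.\ShareSecret$ on line~\ref{line:approx:share}, by AVSS \pAVSSValidity{} and \pNotificationTotality{} every correct process eventually gets $\AVSS_j.\SharingComplete()$ for every correct $j$, hence $\GatherAccept(j)$ satisfies \pAcceptValidity{} and \pAcceptTotality{}. Gather therefore delivers some set $S$ at every correct process. BAA terminates after a finite number of rounds on the binary input vector, and afterwards every correct process invokes $\AVSS_j.\EnableRetrieve()$ for all $j$, so by AVSS \pRetrieveTermination{} the $\Retrieve$ calls on line~\ref{line:approx:retrieve} complete (note that if $w_j' \ne 0$ then some correct process had $w_j = 1$, which means $\GatherAccept(j)$ returned \True{} at some correct process, so by Gather's \pAcceptTotality{} and AVSS \pNotificationTotality{} the secret of $j$ is retrievable).

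For \pApproximateEpsConsistency, I would examine the outputs of two correct processes $i$ and $k$, writing the pre-modular sums as $\Sigma_i = \sum_j x_j w_j'^{(i)}$ and $\Sigma_k = \sum_j x_j w_j'^{(k)}$. The crucial observation is that by Gather's \pBindingCommonCore, there is a common core $\CommonCore$ of size at least $n-f$ such that $\CommonCore \subseteq S_i \cap S_k$; thus both processes feed $w_j = 1$ into BAA for every $j \in \CommonCore$, and by BAA \pValidity{} we get $w_j'^{(i)} = w_j'^{(k)} = 1$. The difference $\Sigma_i - \Sigma_k$ therefore ranges only over the at most $f$ indices $j \notin \CommonCore$, and by BAA's $\aaEpsilon$-approximation with $\aaEpsilon = \ApproxCoinEpsilon$ we get $|\Sigma_i - \Sigma_k| \le f \cdot (D{-}1) \cdot (\epsilon/f) < \epsilon D$. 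Taking the ceiling adds at most $1$ to absolute difference, and reducing modulo $D$ cannot increase the ring distance, so $\dist{D}(y_i, y_k) \le \lceil \epsilon D \rceil$.

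For \pOneProcessRandomness, I would isolate a ``lucky'' index. Since $|\CommonCore| \ge n-f \ge 2f+1$, at least $f+1 \ge 1$ processes in $\CommonCore$ are correct; pick any such $j^*$. By the argument above, $w_{j^*}'^{(i)} = 1$ at every correct $i$, so $\Sigma_i = x_{j^*} + R$ where $R = \sum_{j \ne j^*} x_j w_j'^{(i)}$. The heart of the argument is that $x_{j^*}$ is uniform on $\DomainRange$ and independent of $R$: $x_{j^*}$ is drawn locally by a correct process via $\RandomInt(\bigDomain)$, and AVSS \pAVSSSecrecy{} ensures that the adversary has no information about $x_{j^*}$ until some correct process invokes $\EnableRetrieve$. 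The key is to notice that $\CommonCore$ and all Byzantine $x_j$ are fixed \emph{before} any correct process invokes $\EnableRetrieve$ -- the former by Gather's \pBindingCommonCore, the latter because Byzantine processes must commit via $\ShareSecret$ before $\GatherAccept(j)$ returns \True. Therefore $R$ is independent of $x_{j^*}$. Since $x_{j^*}$ is an integer, $\lceil x_{j^*} + R \rceil = x_{j^*} + \lceil R \rceil$, so $y_i = (x_{j^*} + \lceil R \rceil) \bmod D$, which is uniform on $\DomainRange$.

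The main subtle point -- and the one I would spend most care on -- is the independence argument for \pOneProcessRandomness: one must pin down exactly when the adversary commits to the common core and to the Byzantine shared values, and argue that both events precede any leakage of $x_{j^*}$ via $\EnableRetrieve$. Everything else is a clean reduction to the stated properties of the building blocks.
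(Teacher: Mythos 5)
Your proofs of \emph{Termination} and \emph{Approximate $\epsilon$-Consistency} match the paper's argument in spirit and substance: the same chain of building-block properties for termination, and the same decomposition of the output difference over the coordinates outside the common core for consistency (the paper bounds via $|\lceil a\rceil - \lceil b\rceil| \le \lceil |a-b|\rceil$ rather than ``ceiling adds at most one, then round,'' but both routes land on $\lceil \epsilon D \rceil$).

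There is, however, a genuine gap in your argument for \emph{One Process Randomness}, and it is exactly the subtlety you flag at the end but do not fully resolve. You argue that $R = \sum_{j\ne j^*} x_j\, w_j'^{(i)}$ is independent of $x_{j^*}$, and you correctly pin down that the common core and the Byzantine-shared $x_j$ are fixed before any correct process invokes $\EnableRetrieve$. But $R$ also depends on the BAA output weights $w_j'^{(i)}$ of process $i$, and these are \emph{not} fixed before the first $\EnableRetrieve$ for a generic correct process $i$. Once the first correct process finishes BAA and calls $\EnableRetrieve$, the adversary learns $x_{j^*}$ and can then influence the weight vectors output by the remaining (slower) correct processes -- e.g.\ by scheduling BAA messages adaptively -- so $R$ at those processes can correlate with $x_{j^*}$, and your independence claim fails for them. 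The conclusion ``$y_i$ is uniform on $\DomainRange$'' for every correct $i$ is therefore not justified (and in fact the paper explicitly does not know how to achieve randomness for all correct processes, which is why the property is stated for \emph{at least one}). The fix is small but essential: let $i$ be specifically the \emph{first} correct process to complete BAA. For that $i$, the weight vector $w'^{(i)}$ is already fixed before any correct process calls $\EnableRetrieve$, so everything $R$ depends on is determined while $x_{j^*}$ is still secret, and the independence and uniformity go through. This is precisely how the paper scopes its argument in \Cref{sec:approxcoin:randomness}.
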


\myparagraph{Complexity analysis} The\atadd{ communication} complexity of our \emph{approximate common coin} can be broken down in:

\begin{enumerate}
    \item $n$\atadd{ instances of} $\AVSS.\ShareSecret$ in parallel\atadd{ $\Rightarrow$ $O(n^3 \lambda)$};
    \item One instance of $\Gather$\atadd{ $\Rightarrow$ $O(n^3\lambda)$};
    \item One instance of $\BAA$ with $\aaEpsilon=\epsilon/f$\atadd{ $\Rightarrow$ $O(n^3\lambda (\log f + \log \frac{1}{\epsilon}))$};
    \item $n$\atadd{ instances of} $\AVSS.\Retrieve$ in parallel\atadd[TODO: check]{ $\Rightarrow$ $O(n^3 \lambda)$};
\end{enumerate}

% Which means that its communication complexity is $O(n^3\lambda\times (\log f + \log (1/\epsilon)))$ while our time complexity is $O(\log f + \log (1/\epsilon))$.
\atrev{Hence, the total communication complexity is $O(n^3 \lambda (\log f + \log \frac{1}{\epsilon}))$ with Bundled Approximate Agreement being the bottleneck.}

\atadd{The time complexity of the protocol is  $O(\log f + \log \frac{1}{\epsilon})$.}

\section{Monte Carlo Common Coin from Approximate Common Coin} \label{sec:probcoin-from-approxcoin}

\begin{algorithm}[!htb]
    \begin{smartalgorithmic}[1]
        
        \DeclareLocalDistributedObject{AC}
        
        \Parameters{domain size $\Domain$, success probability $\delta$}
        
        \algspace
        \State let $\domainFactor = \lfreplace{\lceil}{\lfloor} \frac{2}{1 - \delta} \lfreplace{\rceil}{\rfloor}$
        
        \algspace
        \DistributedObjects
            \State $\AC$ -- instance of approximate common coin with domain size $\domainFactor \Domain$ and precision $\epsilon=\frac{1}{\domainFactor \Domain}$
        \EndDistributedObjects
        
        \algspace
        \Operation{\Toss}{} \Returns{\Integer}
            % \State $x = \AC.\Toss()$
            % \State $\domainFactor = \lceil \frac{2\Domain}{1-\delta} \rceil / \Domain$
            \State \Return $\left\lfloor \dfrac{\AC.\Toss\atadd{()}}{\domainFactor} \right\rfloor$
            %\Return{$\left\lfloor \frac{x}{\domainFactor} \right\rfloor$}
            % \State \Return{$\left\lfloor x \,/\, \frac{1-\delta}{2} \right\rfloor$}
        \EndOperation
    \end{smartalgorithmic}
    
    \caption{Monte Carlo Common Coin from Approximate Common Coin, code for process $i$}
    \label{alg:probcoin-from-approxcoin}
\end{algorithm}

In this section, we present a simple reduction from an approximate common coin to a Monte Carlo common coin.
The very short pseudocode is in \Cref{alg:probcoin-from-approxcoin}.

The transformation requires first to generate an approximate common coin of domain $k\Domain$ where $k$ is an integer number $\lfloor \frac{2}{1 - \delta} \rfloor$ and $\epsilon =\frac{1}{\domainFactor \Domain}$. This implies that different processes shall get values at most
$\left \lceil \frac{1}{\domainFactor \Domain}\cdot \domainFactor\Domain \right\rceil = 1$ apart.

The domain of the approximate coin is $\domainFactor$ times larger than the domain of the targeted Monte Carlo common coin. By dividing the result by \atreplace{domain factor}{$\domainFactor$}, we get the desired range of values and success probability $\delta$, where $k$ values of the \emph{approximate common coin} are mapped to one value of \emph{Monte Carlo common coin}.

\begin{theorem}
    \Cref{alg:probcoin-from-approxcoin} implements a Monte Carlo common coin with domain $\Domain$ and success probability \atrev{$\delta$}.
    \begin{proof}
        {\pTermination} and {\pUnpredictability} follow from the properties of \emph{approximate common coin}, while {\pRandomness} follows from {\pOneProcessRandomness}\atadd{ since exactly $\domainFactor$ values from the larger domain ($\domainFactor \Domain$) are mapped to each value in the smaller domain ($\Domain$)}.
        Let $x'$ be the resulting toss of the first correct process that completes $\BAA$ in its approximate common coin toss. Then, as established, every other process will be at a distance at most 1 from it. Hence, if the \atreplace{rest}{remainder} of the division of $x'$ by $\domainFactor$ is neither $0$ nor $\domainFactor-1$, every correct process decides the same value:
        
        \begin{align*}
            1-\delta &\geq \frac{2}{\domainFactor} = \frac{2}{\lfloor \frac{2}{1 - \delta} \rfloor} \geq 1-\delta
        \end{align*}
    \end{proof}
\end{theorem}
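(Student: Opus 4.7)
The plan is to verify each of the four Monte Carlo common coin properties---{\pTermination}, {\pUnpredictability}, {\pRandomness}, and {\pProbabilistcDeltaConsistency}---for the wrapped algorithm, leveraging the guarantees of the underlying approximate coin instantiated with domain $\domainFactor D$ and precision $\epsilon = 1/(\domainFactor D)$.

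{\pTermination} and {\pUnpredictability} transfer immediately, since the wrapper applies only the deterministic, non-blocking post-processing $x \mapsto \lfloor x/\domainFactor \rfloor$ to the approximate coin's output; it neither waits on additional messages nor leaks information beyond what the underlying coin already exposes at reveal time. For {\pRandomness}, I would invoke {\pOneProcessRandomness}: there is a correct process $i^*$ whose underlying coin output $x_{i^*}$ is uniform on $[0..\domainFactor D - 1]$. The map $x \mapsto \lfloor x/\domainFactor \rfloor$ partitions this range into $D$ blocks of $\domainFactor$ consecutive integers, so the image is uniform on $[0..D-1]$, and $i^*$ returns a uniform sample from the target domain.

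The substantive step is {\pProbabilistcDeltaConsistency}. By {\pApproximateEpsConsistency}, any two correct outputs $x, y$ of the underlying coin satisfy $\dist{\domainFactor D}(x, y) \leq \lceil \epsilon \cdot \domainFactor D \rceil = 1$, so every correct output lies within distance $1$ in the ring $\mathbb{Z}_{\domainFactor D}$ of $x_{i^*}$. Whenever $x_{i^*} \bmod \domainFactor \notin \{0, \domainFactor - 1\}$, the value $x_{i^*}$ together with both of its ring-neighbours falls strictly inside a single block of length $\domainFactor$, so every correct output rounds down to the same quotient and all correct processes agree. Since $x_{i^*}$ is uniform, this ``safe'' event has probability at least $1 - 2/\domainFactor$; the definition of $\domainFactor$ then yields $2/\domainFactor \leq 1 - \delta$ (matching the chain displayed in the author's sketch), so agreement holds with probability at least $\delta$, as required.

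The main subtlety to check is the ring wrap-around at $\{0, \domainFactor D - 1\}$: for instance, if $x_{i^*} = 0$, its ring-neighbour $\domainFactor D - 1$ does lie in a different block of the quotient map. However, this case is automatically captured by the bad-remainder set, since $0 \bmod \domainFactor = 0$ and $(\domainFactor D - 1) \bmod \domainFactor = \domainFactor - 1$, so it contributes nothing beyond what is already accounted for in the $2/\domainFactor$ bound. I expect this to be the only place where the argument requires a moment of care.
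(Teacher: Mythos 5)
Your proof follows the paper's own argument essentially verbatim: you reduce termination, unpredictability, and randomness to the corresponding properties of the approximate coin (using the $k$-to-$1$ block partition for uniformity), and for probabilistic consistency you bound the ring-distance between any two correct outputs by $1$ and observe that disagreement can occur only when the first correct output lands on a block boundary, a uniform event of probability at most $2/k \le 1-\delta$. Your explicit check of the ring wrap-around at $\{0, kD-1\}$ is a nice bit of care that the paper's terse proof leaves implicit.
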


\myparagraph{Complexity analysis}

\atadd{This protocol runs a single instance of an Approximate Common Coin, with precision $\epsilon = \frac{1}{D\lfloor \frac{2}{1 - \delta} \rfloor}$.
If used with our algorithm from \Cref{sec:approxcoin}, it will take 
$O(\log f + \log \frac{1}{\epsilon}) 
%= O(\log f + \log \frac{2D + 1 - \delta}{1 - \delta})
= O(\log f + \log D + \log \frac{1}{1 - \delta})$ rounds of approximate agreement.}

\atadd{Hence, the overall time complexity of the protocol is $O(\log f + \log D + \log \frac{1}{1 - \delta})$ and the communication complexity is $O(n^3 \lambda (\log f + \log D + \log \frac{1}{1 - \delta}))$.}

\section{Direct Implementation of Monte Carlo Common Coin} \label{sec:probcoin}

\begin{algorithm}[!htb]
    \begin{smartalgorithmic}[1]
        % \DeclareLocalVariable{proposedSet}
        % \DeclareLocalVariable{proposedValues}
        % \DeclareLocalVariable*[gatherProof]{\pi_{\text{gather}}}
        \DeclareLocalVariable{candidates}
        
        \DeclareLocalVariable{tickets}
        \DeclareLocalVariable{values}
        \DeclareLocalVariable{winner}

        \Parameters{domain\atadd{ size} $\Domain$, success probability $\delta$, security parameter $\lambda$}
        \algspace
        \FunctionDeclaration
            \State $\Calibrate(w)$ -- returns the weight to apply to a ticket given that $\BAA$ returned $w$
        \EndFunctionDeclaration
        
        \algspace
        
        \DistributedObjects
            \State $\TicketDraw$ -- instance of Random Secret Draw with domain\atadd{ size} $2^\lambda$
            \State $\ValueDraw$ -- instance of Random Secret Draw with domain\atadd{ size} $\Domain$
            \State $\Gather$ -- instance of Gather\atremove{ (verifiability is not necessary for this protocol)}
            \State \atrev{$\BAA$ -- instance of Bundled Approximate Agreement with precision $\aaEpsilon$,}
            \State \atrev{\qquad\quad\, where $\aaEpsilon$ depends on the calibration function (see ``Weight calibration'' below)}
            % \State 
            % \ensuremath{\begin{aligned}
            %     \text{$\BAA$ -- }   &\text{instance of Bundled Approximate Agreement with precision $\aaEpsilon$,} \\
            %                         & \text{where $\aaEpsilon$ depends on the calibration function (see ``Weight calibration'' below)}
            % \end{aligned}}
            % \ensuremath{\begin{aligned}
            %     \State \text{$\BAA$ -- }   &\text{instance of Bundled Approximate Agreement with precision $\aaEpsilon$,} \\
            %     \State                     & \text{where $\aaEpsilon$ depends on the calibration function (see ``Weight calibration'' below)}
            % \end{aligned}}

%             \Statex {\raggedleft
%                 where $\aaEpsilon = \begin{cases}
%                     O\left(n+\frac{1}{1-\delta}\right), & \text{if } \Calibrate(w) = w\\
%                     O\left(\frac{1}{1-\delta}\right), & \text{if } \Calibrate(w) = \begin{cases}
%     0, & \text{if } w = 0 \\
%     \frac{w-\epsilon}{1-\epsilon} + \frac{1-w}{1-\epsilon}\cdot v, & \text{otherwise}\\
%     \text{where } v=1 - \frac{\ln(2/(1-\delta))}{2n/3}
% \end{cases}
%                 \end{cases}$}
%                 \Statex {\raggedleft
%                 For the exact values of {\aaEpsilon}, refer to \lfnote{Once figures are ready, link them here}
%                 }
        \EndDistributedObjects

        \algspace
        \Function{\GatherAccept}{$j$} \Returns{\Boolean}
            \State \Return $\True$ iff received both $\TicketDraw.\ValueAssigned(j)$ and $\ValueDraw.\ValueAssigned(j)$
        \EndFunction
            
        \algspace
        \Operation{\Toss}{} \Returns{\Integer}
            \State $\TicketDraw.\StartRSD()$ \label{line:prob:tdraw}
            \State $\ValueDraw.\StartRSD()$ \label{line:prob:vdraw}
            \State $\Gather.\StartGather(\GatherAccept)$ \label{line:prob:gstart}
        
            \algspace
            % \Upon{event $\Gather.\DeliverSet(S\atremove{, \gatherProof})$}
        
            \State \atrev{\WaitFor{event $\Gather.\DeliverSet(S)$}}
            \State $\forall j \in \AllProc:$ 
                let $w_j = \begin{cases}
                    1, & j \in S,\\
                    0, & otherwise \\
                \end{cases}$
            \State $[w_1', \dots, w_n'] = \BAA.\RunBAA([w_1,\dots,w_N])$ \label{line:prob:aa}
            \State $\candidates = \{ j \in \AllProc \mid w_j' > 0 \}$
            
            \algspace
            \State $\TicketDraw.\EnableRetrieve()$ 
            \State $\ValueDraw.\EnableRetrieve()$ \label{line:prob:enable-retrieve}
            \State $\tickets = \TicketDraw.\RetrieveValues(\candidates)$
            \State $\values = \ValueDraw.\RetrieveValues(\candidates)$
            
            \algspace
            \State $\winner = \argmax\limits_{j \,\in\, \candidates}  \Calibrate(w_j') \cdot \tickets[j]$ \label{line:prob:calib}
            \State \Return $\values[\winner]$ \label{line:prob:decide}
            % \EndHandler
        \EndOperation
    \end{smartalgorithmic}
    
    \caption{Monte Carlo Common Coin, code for process $i$}
    \label{alg:probcoin}
\end{algorithm}

\myparagraph{Overview} The main idea of \atreplace{our \emph{Monte Carlo Common Coin}}{this} protocol is to assign to each process a random value and a random ticket. Then, using approximate agreement, the protocol is able to select the process with maximum ticket with adjustable probability of success and adopt the value corresponding to this ticket as the coin value.

\myparagraph{Tickets and values} The protocol first assigns random values and random tickets to each participant, maintaining both secret until later (lines~\ref{line:prob:tdraw} and~\ref{line:prob:vdraw}). Processes then gather a list of participants who have both drawn a ticket and a value, guaranteeing that all processes will hold sets that all intersect in at least $n-f$ participants (line~\ref{line:prob:gstart}).

\myparagraph{Approximate Agreement} In a similar manner as in the previous protocol, each process runs Bundled Approximate Agreement inputting $1$ in the dimensions corresponding to the processes it has received from 
% the Verifiable Gather invocation
Gather, and $0$ in the other dimensions (line~\ref{line:prob:aa}).
Similar to the previous protocol, if a process has not made a valid draw it will always be assigned weight zero, whereas if the weight is different than zero then it is possible to recover the secretly drawn number.

\myparagraph{Opening the secrets} Prior to the first decision of a correct process in $\BAA$, no secrets are leaked, as the underlying Random Secret Draw abstraction requires at least one correct process to invoke the $\EnableRetrieve$ operation before any information about the generated numbers is revealed.
% a threshold of $f+1$ processes to reveal a secret.
After this first decision of a correct process, the secrets 
% are open (lines \ref{line:prob:topen} and \ref{line:prob:vopen})
can be opened (line~\ref{line:prob:enable-retrieve}), but at this point the adversary can only induce other processes deciding values which are at most $\aaEpsilon$ apart from the first decision, which does not undermine the safety of the protocol.

\myparagraph{Decision} With the tickets and values now openly available, the processes calibrate the tickets by multiplying the plain ticket 
by a \emph{calibration function} applied to the weights.
The simplest calibration function is an identify function, the calibrated ticket of a process $i$ is simply the product of the output $i$-th output of BAA and the original ticket $t_i$. In their final steps, processes decide the value corresponding to the highest calibrated ticket (lines \ref{line:prob:calib} and \ref{line:prob:decide}).

\myparagraph{Weight calibration}

% \begin{wrapfigure}{L}{0.32\textwidth}
\begin{figure}
    \centering
    \input{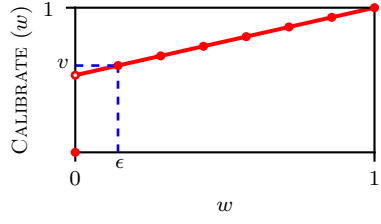}
    \vspace{-0.3cm} %% to much blank space after a pgf
    \caption{The weight calibration function.}
    \label{fig:calibrate}
\end{figure}
% \end{wrapfigure}

As shown in \Cref{sec:probcoin-proof-without-calibration}, the protocol without weight calibration (i.e., with $\Calibrate(w) = w$) requires $\ProbcoinRoundsWithoutCalibrationWithDelta$ rounds of approximate agreement in order to achieve the success probability $\delta$.
% A similar performance could be achieved by applying the transformation described in \Cref{sec:relations} to our Approximate Common Coin construction.
\atrev{A similar performance is achieved by the protocol in \Cref{sec:probcoin-from-approxcoin}.}

In order to get rid of the $\log_2(n)$ part in the time complexity, we use a calibration function that is linear on $(0, 1]$ with a discontinuity point at $0$, as illustrated in \Cref{fig:calibrate}.
If $w_1 > 0$ and $w_2 > 0$ and $|w_1 - w_2| = \aaEpsilon$, then, after calibration, $|\Calibrate(w_1) - \Calibrate(w_2)| \approx \aaEpsilon \cdot (1-v)$.
This is similar in effect to running extra $\log_2 \frac{1}{1-v}$ rounds of approximate agreement, but at no extra latency cost.
We then balance the value of the parameter $v$ in such a way that, intuitively, the discontinuity at $0$ is very unlikely to cause disagreement. An example of a good value for $v$ that achieves this goal is $1 - \frac{\ln(2/(1-\delta))}{2n/3}$.
A detailed proof of the solution with weight calibration is presented in \Cref{sec:probcoin-proof-with-calibration}.
In order to achieve success probability $\delta$, $\ProbcoinRoundsWithCalibrationWithDelta$ rounds of approximate agreement are required.

\begin{theorem} \Cref{alg:probcoin} implements a Monte Carlo common coin.
    \begin{proof}
        \Cref{proof:prob} shows that the algorithm guarantees \emph{Termination}, \emph{Uniform distribution}, and \emph{Probabilistic $\delta$-consistency}.
    \end{proof}
\end{theorem}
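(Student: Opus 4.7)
The plan is to verify the three properties of a Monte Carlo common coin by composing the guarantees of the underlying primitives. \textbf{Termination} is the most direct: every correct process invokes $\StartRSD$ on both $\TicketDraw$ and $\ValueDraw$, so by {\pAssignmentTermination} and {\pNotificationTotality} the $\GatherAccept(j)$ predicate eventually holds for every correct $j$ (and never reverts), hence $\Gather$ outputs by {\pAcceptValidity} and {\pAcceptTotality}, $\BAA$ terminates by its own {\pTermination} property, and the subsequent $\RetrieveValues$ calls complete by {\pRevealTermination}. The operation therefore always returns.

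For \textbf{Randomness}, I would argue that the returned value is uniformly distributed over $\Domain$. By {\pBindingCommonCore}, the candidates set always contains at least $n-f$ processes, among which at least one is correct; the ticket domain of size $2^\lambda$ makes all calibrated tickets distinct except with negligible probability in $\lambda$, so the $\argmax$ is well-defined. The key step is that {\pUnpredictability} of RSD guarantees that no random draw is revealed to the adversary until some correct process invokes $\EnableRetrieve$, which happens only after $\BAA$ has completed; hence the common core, the weight vector, and the ticket-driven winner-selection procedure are all fixed before any value is exposed. By {\pRandomness} of RSD, the value assigned to each process (Byzantine included) is uniform over $\Domain$ and independent of the tickets, so the value at the selected winner is uniform.

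The \textbf{Probabilistic $\delta$-consistency} proof is the main obstacle. I would show that with probability at least $\delta$, every correct process selects the same winner. By $\aaEpsilon$-consistency of $\BAA$, the weight vectors seen by two correct processes differ component-wise by at most $\aaEpsilon$; after applying $\Calibrate$ -- linear on $(0,1]$ with a jump from $0$ to $v$ at the origin -- the calibrated tickets for the same $j$ differ by at most $(1-v)\aaEpsilon \cdot t_j$ when both weights are strictly positive and by at most $v \cdot t_j$ when one view has weight zero. Disagreement therefore occurs only through two bad events: (i) the top two calibrated tickets in some correct view are within a gap of order $(1-v)\aaEpsilon$; or (ii) a ``disputed'' process $j$ (positive weight in one correct view, zero in another) has its calibrated ticket $v \cdot t_j$ large enough to beat the leader in some other view. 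Event (i) is bounded by the uniformity of tickets together with a union bound over the $O(n^2)$ candidate pairs; event (ii) is controlled by taking $v$ close to $1$, noting that the common core (of size $\geq n-f$) is never disputed so the disputed set has size at most $f$, while the leader's ticket dominates a maximum of at least $n-f$ independent uniform draws. The hardest part will be balancing these two failure modes against the number of approximate agreement rounds (which controls $\aaEpsilon$) and picking $v$ optimally. The concrete choice $v = 1 - \ln(2/(1{-}\delta))/(2n/3)$ should make the two tails comparable and yield the claimed round count of $\ProbcoinRoundsWithCalibrationWithDelta$; I expect the concentration argument for event (ii) -- how often a single ticket shrunk by factor $v$ can beat the maximum of $\geq n-f$ independent uniform tickets -- to demand the most careful calculation, and this is where the bulk of the appendix proof should concentrate.
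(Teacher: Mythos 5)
Your high-level decomposition matches the paper's appendix proof: Termination and Uniform Distribution are argued essentially as you do (your versions are, if anything, more careful — the paper's randomness proof is two sentences that do not even mention that the winner-selection must be fixed before any ticket or value is revealed, a point your {\pUnpredictability} argument makes explicit), and your two bad events for $\delta$-consistency are equivalent to the paper's decomposition: your event~(ii) implies $X_M^m \le v$, which is precisely the conditioning event the paper singles out in its Lemma~12, and your event~(i) corresponds to the paper's Lemma~13 reduction to the no-calibration case with effective gap $\newEpsilon = \Theta(\epsilon(1-v))$.

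There is, however, one genuine gap in the plan. You propose to bound event~(i) ``by the uniformity of tickets together with a union bound over the $O(n^2)$ candidate pairs.'' A union over pairs gives $P[\text{event (i)}] = O(n^2 \newEpsilon)$. With $1-v = \Theta\!\left(\tfrac{1}{n}\log\tfrac{1}{1-\delta}\right)$, that is $O\!\left(n^2\epsilon \cdot \tfrac{1}{n}\log\tfrac{1}{1-\delta}\right) = O\!\left(n\epsilon\log\tfrac{1}{1-\delta}\right)$, which still has a factor of $n$, and therefore still forces $\log_2 n$ extra rounds of approximate agreement — exactly the term the calibration was supposed to remove. The paper avoids this by not doing a pairwise union bound: it conditions on which process is the leader $i_M$, integrates over the value of the leader's ticket $X_M^{\mathrm{1st}}$ (whose density is the maximum of $\ge n-f$ uniforms and so concentrates near $1$), and then unions only over the $n-1$ candidates relative to that fixed leader. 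This gives $O(n\newEpsilon)$ rather than $O(n^2\newEpsilon)$, and that single factor of $n$ is what cancels against the $1/n$ in $1-v$. Relatedly, you predict that event~(ii) — a disputed ticket beating the leader — will ``demand the most careful calculation,'' but the paper's analogue is a one-line exponential-tail bound $v^{2n/3} \le e^{-\ln(2/Q)}$. The heavy lifting is all in the integral for event~(i); that is where the bulk of your appendix effort should go.
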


\myparagraph{Complexity analysis}
The\atadd{ communication} complexity of our \emph{Monte Carlo common coin} can be broken down in:

\begin{enumerate}
    % \item $2n$ secret draws in parallel which are themselves in parallel with a gather;
    \item \atrev{2 instances of Random Secret Draw $\Rightarrow$ $O(n^3 \lambda)$};
    \item \atrev{1 instance of $\Gather$ $\Rightarrow$ $O(n^3 \lambda)$}
    \item One instance of $\BAA$ with $\aaEpsilon=O(1/(1-\delta))$\atadd{ $\Rightarrow$ $O(n^3 \lambda \log \frac{1}{1-\delta})$};
    \item $2n$ secret retrievals in parallel\atrev{ $\Rightarrow$ $O(n^3 \lambda)$};
\end{enumerate}

% Therefore its communication complexity is $O(n^3\lambda \log (1/(1-\delta)))$ while our time complexity is $O(\log (1/(1-\delta)))$.
Hence, the total communication complexity is $O(n^3\lambda \log \frac{1}{1-\delta})$ with Bundled Approximate Agreement being the bottleneck.
The time complexity of the protocol is $O(\log \frac{1}{1-\delta})$.

\myparagraph{Empirical performance analysis via simulation}

\begin{figure}
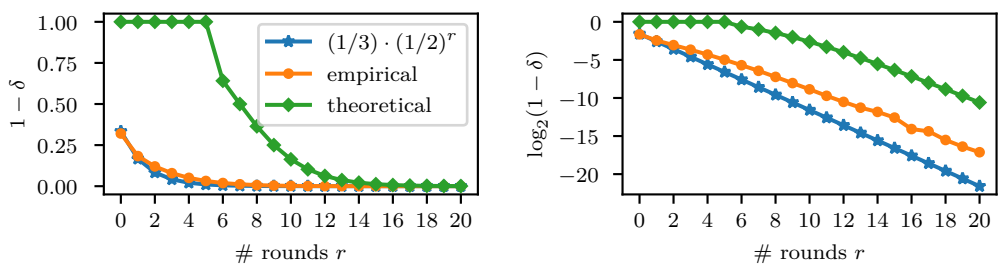

    \centering
    \input{linear_scale.pgf}
    \input{log_scale.pgf}
    \caption{\atrev{Empirical estimate of the failure probability} of the Monte Carlo common coin protocol with weight calibration\atadd{, for $n=50$,} compared to the theoretical estimate.}
    \label{fig:simulation-results}
\end{figure}

% \pkrev{In \Cref{sec:probcoin-proof-with-calibration}, we show that the \emph{worst-case} time complexity of our Monte Carlo common coin with a calibration function is $O(\log \frac{1}{1 - \delta})$ rounds.
% %
% %However, this worst-case bound on the exact number of rounds of approximate agreement is clearly pessimistic.
% %
% To get a better understanding of the actual performance of the protocol, we performed simulations.}

\atrev{In \Cref{sec:probcoin-proof-with-calibration}, we proved that our Monte Carlo common coin with a calibration function achieves time complexity of $O(\log \frac{1}{1 - \delta})$.
However, the derived upper bound on the exact number of rounds of approximate agreement $\left(\ProbcoinRoundsWithCalibrationWithDelta\right)$ is clearly pessimistic.
To get a better understanding of the actual performance of the protocol, we performed simulations.}

\Cref{fig:simulation-results} presents the empirical estimate of failure probability of our protocol compared to the theoretical value and the ``ideal'' value of $(1/3) \cdot (1/2)^r$, where $r$ is the number of rounds of approximate agreement.
Each point on the ``empirical'' curve is based on 3 experiments and each experiment consists of 1 million simulated executions.
In each execution, $n=50$ processes are assigned random tickets and the adversary wins if it manages to get processes in disagreement on which process has the maximum ticket.
% The adversary is allowed to select one of the two strategies: 
The result of each experiment is the ratio between the number of executions where the adversary wins to the total number of executions (1 million).
Then we average over 3 such experiments to get the value for each point on the curve.

The estimate for the optimal value of the parameter $v$ in the calibration function for each number of rounds was also computed based on simulations (independent from the ones used to estimate the final failure probabilities).
Note that it may only increase the estimated empirical latency compared to the (unknown) optimal value for $v$.

\section{Applications}
\label{sec:applications}

\subsection{Binary Byzantine Agreement} \label{sec:binary-agreement}
Monte Carlo common coin can be plugged into any Byzantine Agreement protocol that makes a call to a probabilistic common coin, such as~\cite{bracha-brb,CanettiRabin93,crain2020algorithms,mostefaoui2015signature}.

Using our Monte Carlo common coin obtained via the transformation from approximate common coin (\Cref{sec:probcoin-from-approxcoin}), we get a protocol that is secure against an adaptive adversary, assumes no trusted setup or PKI, and exhibits communication complexity $O(n^3 \lambda \log n)$\atadd{ at the expense of extra $O(\log n)$ factor in time complexity}.
As far as we know, the best existing setup-free solutions that are \atadd{resilient against adaptive adversary} and tolerate up to $f<n/3$ failures exhibit\atadd{ communication complexity of} $O(n^4\lambda)$~\cite{kogias2020adkg,abraham2021reaching}. % and require a PKI.

\subsection{Intersecting Random Subsets} \label{sec:committee-elections}

A direct application of an approximate common coin 
is a problem we call \emph{intersecting random subsets}. This problem consists of given a globally known set $S$ of cardinality $n$, to chose a subset $s \subseteq S$ of cardinality $m \leq n$.

The following variation of Gray Codes~\cite{gray-codes,coding-theory-textbook} will be instrumental:

\begin{definition} \label{def:code}
Code $C_{n,m}$ is a list of \lfreplace{$n-f$}{$\binom{n}{m}{-}1$} binary strings (called \emph{code words}) satisfying the following conditions:
\begin{itemize}
    \item for all $i$, $C_{n,m}[i]$ is a string of $m$ ones and $n-m$ zeros;
    \item every string of $m$ ones and $n-m$ zeros is present in $C_{n,m}$ exactly once;
    \item $\forall i \in \left\{1,\dots,\binom{n}{m}{-}1\right\}: C_{n,m}[i]$ and $C_{n,m}[i{-}1]$ differ in \lfremove{at most} two bits;
    \item $C_{n,m}[\binom{n}{m}{-}1]$ and $C_{n,m}[0]$ differ in \lfremove{at most} two bits.
\end{itemize}
\end{definition}

In \Cref{app:committee-elections-concrete-code}, we provide a concrete code that satisfies these properties and an algorithm (in the form of a recursive formula) that allows to efficiently generate a code word by its index.

Intuitively, this code is composed of binary strings of length $n$ and it can be read in the following manner: if the $i$-th position of the string is $1$, then $i$-th element is selected, otherwise it is considered to be left outside. 
Moreover, this code has the property that consecutive numbers differ only by swapping exactly one position set to $1$ with a position marked with a $0$. Therefore all consecutive subsets have the same fixed size and include $s-1$ common elements and differ by only one.Hence, by generating an approximate common random coin over the domain $\left\{0..\binom{n}{m}{-}1\right\}$ with parameter $\epsilon \le k \cdot \binom{n}{m}^{-1}$, processes can select subsets of size $m$ differing by at most $k$ elements.

This could be interesting for selecting a committee among $n$ users in scenarios subject to a mobile Byzantine adversary, i.e. on systems where the set of processes who display malicious behaviour changes, provided the time to corrupt a majority of processes in any given committee is higher than an asynchronous round.
% \atnote{Say that the protocol is limited to small committees, but could be improved in the future. Also say that with this protocol we intend to illustrate an interesting technique for other potential applications of approximate common coin.}\pknote{You say it in the intro. Maybe enough?}
%
% Notice that this provides \atreplace{a good}{an interesting} alternative to protocols such as Algorand~\cite{algorand} depending on system requirements, as contrary to theirs, our solution not only is completely asynchronous, but it also guarantees a fixed committee size and we also provide the possibility to control the intersection of quorums obtained by different users, while it might happen in their case that quorums do not intersect at all.
Note that this solution provides an interesting alternative to committee elections in protocols such as Algorand~\cite{algorand}.
It not only is completely asynchronous, but it also guarantees a fixed committee size and provides a way to control the intersection of quorums obtained by different users.
Recall that in the case of Algorand, with non-zero probability, it might happen that quorums do not intersect at all.

\section{Related Work} 
\label{sec:related}

Ben-Or~\cite{BenOr83} proposed the first randomized consensus algorithm based on the ``independent choice'' common coin. 
\pkrev{
In this algorithm, every participant tosses a local random coin and with probability $2^{-n}$, the values picked by $n$ participants are identical. 
Bracha~\cite{bracha-brb} extended this algorithm to the Byzantine fault model with $f<n/3$ faulty participants.
}

\pkrev{
Rabin~\cite{Rab83} proposed an implementation of a perfect common coin based on Shamir's secret sharing~\cite{shamir1979share}, assuming trusted setup (a trusted dealer distributes  \emph{a priori} a large number of secrets).
}
Today, most protocols that allow trusted setup use the solution proposed by Cachin et al.~\cite{cachin2005random} who have described a practical perfect common coin based on threshold pseudorandom functions (tPRF), assuming that a trusted dealer distributes a short tPRF key.
These protocols use the pre-distributed randomness in a clever way to obtain multiple numbers that are computationally indistinguishable from random (much like a PRNG~\cite{prng}).
In contrast, our algorithms do not assume trusted setup.

\pkrev{In the setup-free context}, Canetti and Rabin~\cite{CanettiRabin93} proposed a weak common coin algorithm based on asynchronous verifiable secret sharing (AVSS), which resulted in an efficient randomized \emph{binary} consensus.
In designing our protocols, we make use of multiple ideas from this work, taking into account recent improvements, such as the use of Aggregatable Publicly Verifiable Secret Sharing~\cite{gurkan2021aggregatable}, suggested in \pkrev{a similar context} by 
Abraham et al.~\cite{abraham2021reaching}. \lfadd{We give a modern version of their coin using our building blocks in \Cref{app:cr93}.} 
%
%Gao et al.\cite{gao2021efficient}.

% \begin{table}[ht]
%     \begin{center}
% \resizebox{\textwidth}{!}{
%   \renewcommand{\arraystretch}{2}
%  \begin{tabular}{c c c c c c c c}
%  \hline
%  Coin & Type & Output & \makecell{Communication\\ (Overall)}& Time& \makecell{Adversarial \\ model}& Techniques\\ [0.5ex] 
%  \hline\hline
%   CR 93~\cite{CanettiRabin93} & \makecell{Probably \\ $\delta = \lfreplace{\frac{1}{2}}{\frac{1}{4}}$} & Bit & \lfnote{Alg in p. 35} & \lfnote{Not in paper} & Adaptive & \makecell{AVSS}\\ \hline
%   CKS 05~\cite{cachin2005random} & Perfect & Bit & $O(N^2*nizk)$ & $1$ & Static & \makecell{ Trusted setup \\ Unique threshold \\ signatures (eg BLS)\cite{boneh2001short}}\\ \hline
%   KMS 20~\cite{kogias2020adkg} & \makecell{Eventually Perfect} & Bit &  &  & & \makecell{}\\ \hline
%   GLLTXZ 21~\cite{gao2021efficient} & \makecell{Probably \\ $\delta = \frac{1}{3}$} & String & $O(N^2)$ & $O(N)$ & & \makecell{ }\\ \hline
% \end{tabular}}
% \end{center}
%     \caption{Comparison of asynchronous distributed RNG solutions}
%     \label{tab:comp}
% \end{table}

Using standard PKI cryptography, Cohen et al.~\cite{cohen2020coincidence} built two common coins relying on VRFs. 
The first one with resilience $f<(1/3-\epsilon)$ achieves \lfadd{success rate} $\delta = \frac{18\epsilon^2+24\epsilon-1}{6(1+6\epsilon)}$.
The second coin involves sampling committees among the processes and also guarantees a constant success rate that depends on the system's \pkrev{resilience}.

Kogias et al.~\cite{kogias2020adkg} \pkrev{proposed a relaxed abstraction} called \emph{eventually perfect common coin}. 
They first build a weak distributed key generator (wDKG) which is a protocol that never terminates: each party outputs a sequence of candidate keys to be used \lfadd{for encryption and decryption} with the property that they will eventually agree on a set of keys.
\pkrev{This mechanism can replace the trusted setup in~\cite{cachin2005random}.
Moreover, it can be shown that the participants may disagree on the set of keys at most $f+1$ times.} 
The resulting coin eventually terminates with a perfect result, hence its name.
\pkadd{In contrast, the challenge of our work was to devise (one-shot) unbiased common coins with provable termination}. 

Gao et al.~\cite{gao2021efficient} combined VRFs and AVSS to produce the first random coin which has $O(n^3\lambda)$ \lfadd{communication} complexity.
\pkrev{With the advent of new broadcast and APVSS implementations, \atadd{the classic protocol of }\cite{CanettiRabin93} gains the same complexity (see \Cref{app:cr93}).}
They created a weak form of \emph{Gather} called \emph{core-set selection} in which $f+1$ correct participants share at least $n-f$ VRFs coming from different processes. 
They then use AVSS to build the seeds for VRFs and whenever the highest VRF is in the common core, the nodes successfully agree in the coin outcome. 
%
% Their protocol assumes static adversary.
% Their protocol assumes the static adversary, \pkrev{but it can be made adaptive by publishing the first common random outcome as a part of the setup}.
Their protocol assumes the static adversary, \atrev{but it can be made adaptive with a \emph{relatively weak} form of trusted setup: a single common random number must be published after the public keys of the participants are fixed}.
\atadd{In contrast, our protocols do not assume any form of trusted setup.}
\pkrev{Assuming\atremove{ the} static adversary, our protocols can achieve the same communication cost while additionally enabling parameterized success rate.}

\pkrev{Our Monte Carlo coin\atadd{ from \Cref{sec:probcoin}} was inspired by the Proposal Election protocol recently introduced by Abraham et al.~\cite{abraham2021reaching}.
Technically, it is not a common coin \emph{per se} but\atadd{ it} uses elements of it.} 
In this protocol, every party inputs some externally valid value, and with \pkadd{a} constant probability, all parties output the same value that was proposed by a non-faulty party chosen at random.
\atadd{Intuitively, the main contribution of the protocol in \Cref{sec:probcoin} is the use of approximate agreement to \emph{amplify} the success probability.}

In the \emph{full information} model, without using cryptography, King and Saia~\cite{king2016byzantine} \pkrev{observed} that the strength of the Byzantine adversary is in its anonymity, \pkrev{but it cannot bias the coin indefinitely without being detected.}
Even though their Byzantine agreement algorithm with polynomial expected time does allow the adversary to bias the coin, 
but amended this with statistical tests aiming at detecting this kind of deviation and evicting misbehaving participants.
\pkadd{The resilience level of this algorithm is, however, orders of magnitude lower than $n$ ($f<n/400$ in the best case).}
\pkrev{
Huang et al.~\cite{huang2022byzantine} recently extended this work to achieve the resilience of $f<n/4$.}
\pkadd{
In contrast, our algorithms use cryptographic tools to produce unbiased (approximate) outputs and maintain optimal resilience $f<n/3$.  
}

%\pknote{The next two paragraphs should go to  concluding remarks?} 
\lfadd{Monte Carlo protocols are randomized algorithms that have a fixed number of rounds and yield results that are correct with a given probability, while Las Vegas protocols always give the correct results but do not have a fixed number of rounds. Notice that Las Vegas algorithms must have a fixed probability of terminating every round and thus can be converted into Monte Carlo by stopping after a fixed number of rounds and deciding a random value if termination is not attained.}

\lfadd{Such a transformation could be applied to \cite{kogias2020adkg}, but since their latency is a function of $O(f)$, the resulting Monte Carlo common coin would also have a latency which is a function of $f$, while our solution is independent of this parameter. Another option would be to create a set of keys using \cite{abraham2021reaching} which could be run a fixed number of rounds and then to use \cite{cachin2005random}. Since their expected number of rounds is not a function of $f$, this transformation would have the same asymptotic complexity as ours, but it would include many unnecessary message delays from the consensus protocol, from the verifiable gather and other parts of their ADKG that are not present in a direct implementation such as ours.}

%\begin{table}[ht]
     %\begin{center}
 %\resizebox{\textwidth}{!}{
   %\renewcommand{\arraystretch}{2}
  %\begin{tabular}{|c|c|c|c|c|c|c|c|}
    %\hline
  %\textbf{Protocol} & $\mathbf{\delta}$ & \textbf{Time} & \textbf{\makecell{Communication\\(Overall)}}\\ [0.5ex] 
  %\hline\hline
   %CR 93~\cite{CanettiRabin93} & \makecell{Probabilistic\\$\delta = 1/4$}& O(\lambda n^ & & Adaptive\\ \hline
   %KMS 20~\cite{kogias2020adkg} & \makecell{Perfect} & $O(n)$ & $O(\lambda n^4)$ & Adaptive\\ \hline
   %GLLTXZ 21~\cite{gao2021efficient} & \makecell{Probabilistic\\$\delta = 1/3$} & $O(1)$ & $O(\lambda n^3)$ & Static\\ \hline
   %AJMMST 21~\cite{abraham2021reaching} & \makecell{Probabilistic\\$\delta = 2/3$} & & & \\ \hline
   %\multirow{2}{*}{New} & $[2/3~1]$      & \multirow{2}{*}{O()} &  & Adaptive \\\cline{4-5}
%                    &                                & \multirow{2}{*}{O()} &  & Adaptive \\\cline{4-5}
 %                       &                                &                      &  & Static \\ \hline
 %\end{tabular}
 %}
 %\end{center}
  %   \caption{Comparison of optimally resilient asynchronous common coins w/o trusted setup.}
   %  \label{tab:comp}
 %\end{table}

\section{Conclusion}
\label{sec:conclusion}

\atrev{%
In this paper, we suggest 2 new relations of the common coin primitive implementable in a fully asynchronous environment.
We provide efficient implementations based on a range of novel techniques.
Our protocols are the first use of approximate agreement to generate random numbers, which is used to keep decided values close in the approximate common coin protocol and to increase the probability of agreement in the direct implementation of the Monte Carlo common coin.}
Moreover, we also introduced elements of coding theory that were not previously applied to the distributed computing realm in the solution of what we called intersecting random subsets.

Further studies are necessary to explore the full potential of using these new abstractions in the design of distributed protocols and to understand the theoretical limits of their performance.
%
% \atremove{This work provides many answers, but it opens even more questions: (1) What is the optimal weight calibration function, (2) how to eliminate $\log f$ from the time complexity of our approximate common coin as we did with our Monte Carlo one, (3) how to achieve communication complexity of $O(n^3\lambda)$ against an adaptive adversary.}

\myparagraph{Tight performance analysis \atreplace{of}{for} probably common coin}

In \Cref{sec:probcoin-proof-without-calibration,sec:probcoin-proof-with-calibration}, we proved that in order to achieve success probability $\delta$, our probably common coin protocol requires $\ProbcoinRoundsWithoutCalibrationWithDelta$ or $\ProbcoinRoundsWithCalibrationWithDelta$ rounds of approximate agreement, depending on whether weight calibration is used.
While\atremove{, according to our simulations,} it seems to correctly reflect the asymptotic behaviour of the actual distribution, these bounds seem to be rather pessimistic when only a few rounds of approximate agreement are run.

For example, according to these bounds, with weight calibration, we will need $8$ rounds in order to achieve $\delta = \frac{2}{3}$.
However, in practice, $\delta = \frac{2}{3}$ is achieved with 0 rounds of approximate agreement as with probability at least $2/3$ a process in the common core will have the largest ticket.
\atadd{As we estimated empirically through simulations, the actual $\delta$ that we obtain after running $8$ rounds of approximate agreement is around $0.993$ instead of $2/3$.}

\myparagraph{Non-linear calibration function for Monte Carlo common coin.}

Weight calibration is necessary to achieve the latency of \atreplace{$O(\log(1/\epsilon))$}{$O(\log \frac{1}{1-\delta})$} rounds in our Monte Carlo common coin protocol.
We chose a concrete linear function because it was relatively simple to analyze (as we could do a reduction to the case without the calibration).
However, this function is unlikely to be optimal. 
The extra \atreplace{$\log_2(\log_2(1/Q))$}{$\log_2(\log_2 \frac{1}{1-\delta})$} part in the resulting estimate on the number of rounds of approximate agreement is likely to be due to sub-optimal choice of the weight calibration function.

% \subsection{Verifiable Proposal Election without an extra round-trip}

% One of the building blocks that we used in this paper was Proposal Election of~\cite{abraham2021reaching}.
% We extended it to \emph{Verifiable} Proposal Election by adding an extra round-trip at the end of the protocol.
% It is an interesting theoretical challenge to avoid this extra overhead.
% Note that it can be done by constructing a verifiable version of the Random Secret Draw abstraction.

\myparagraph{Approximate common coin without extra $\log_2(f)$ rounds}

Using the magic of weight calibration, for Monte Carlo common coin, we managed to achieve $O(\log(1/\epsilon))$ time complexity, which is likely to be optimal.
However, our approximate common coin protocol requires $\ApproxCoinRounds$ rounds of approximate agreement and, hence, its time complexity depends on two variables: $f$ and $\epsilon$.
In some applications, $\epsilon$ may be constant and $\log_2(f)$ can become the bottleneck.

Creating a protocol without this extra delay or proving a $\Omega(\log(f))$ lower bound would be an interesting contribution to the understanding of the approximate common coin abstraction. It would also mean that the transformation from approximate common to Monte Carlo common would result in more efficient coins of the latter type.

\bibliographystyle{plainurl}
\bibliography{main}

\newpage
\appendix

\section{Impossibility of an Asynchronous Perfect Common Coin} \label{app:impossibility}

A perfect common coin protocol makes sure that the correct processes agree on a random number taken uniformly over a specific domain.  

By reusing the classical arguments of the impossibility of asynchronous consensus~\cite{FLP85}, we are going to show that no asynchronous perfect common coin protocol may exist, if a single process is allowed to fail by crashing.   
Recall that we consider protocols in which processes may non-deterministically choose its actions based on local coin tosses.

Formally, a protocol provides each process with an automaton that, given the process' state and an input (a received message and a result of a local random coin toss), 
%\atnote{it reads as if the process' state consists of a single received message and a single coin toss}, 
produces an output (a finite set of messages to send and/or the application output). 
We assume that the automaton itself is deterministic, i.e., all non-determinism is delegated to the outcomes of local random coins. 
A step of process $p$ is therefore a tuple $(p,m,r)$,
%\atnote{$r$ stands for ``random'', but $r'$ stands for ``result'', which is a bit confusing. Perhaps, we could use $o$ (as in ``output'') instead of $r'$?}
where $m$ is the message $p$ receives (can be $\bot$ if no message is received in this step) and $r$ is the outcome of its local random coin.

%$m'$ is the message $p$ sends (can be $\bot$ if there is no message to send) and $r'$ is the result of the random coin generation (can be $\bot$ if no result is ready).  
%
%\atnote{It looks as if a process can send at most 1 message as a ``reaction'' to an incoming message. However, in real algorithms, a process usually sends multiple messages in one procedure.}

A \emph{configuration} of the protocol assigns a local state to each process' automaton and a set of messages in transit (we call it the \emph{message buffer}). 
The \emph{initial configuration} $C_{\textit{init}}$ assigns the initial local state to each process and assumes that there are no messages in transit.
A step $s=(p,m,r)$ is \emph{applicable to a configuration} $C$ if $m$ is $\bot$ or $m$ is in the message buffer of $C$.   
The result of $s$ applied to $C$ is a new configuration $C.s$, where, based on the automaton of $p$, the local state of $p$ is modified and finitely many messages are added to the message buffer.

%
% \atnote{maybe ``$C{\cdot}s$'' or ``$C{\bullet}s$'' would be slightly easier to read?}
% 
A sequence of steps $E=s_1,s_2,\ldots$ is applicable to a configuration $C_0$ if each $s_i$, $i=1,2,\ldots$ is applicable to $C_{i-1}$, where, for all $i\geq 1$, $C_i=C_{i-1}.s_i$.
We use $C_0.E$ to denote the resulting configuration;
we also say that $C=C_0.E$ is an \emph{extension of $C_0$}.
By convention, $C$ is a trivial extension of $C$.

An \emph{execution} of the protocol is a sequence of steps $E$ applicable to $C_{\textit{init}}$.
A finite execution $E$ results in a \emph{reachable} configuration $C_E$.
Somewhat redundantly, we call a sequence of steps applicable to a reachable configuration $C_E$ an \emph{extension of $E$}. 
It immediately follows that steps of disjoint sets of processes commute:
\begin{lemma}
\label{lem:local}
Let $C$ be a reachable configuration, and $E$ and $E'$ be sequences of steps of disjoint sets of processes. If both $E$ and $E'$ are applicable to $C$, then $C.E.E'$ and $C.E'.E$ are identical reachable configurations.  
\end{lemma}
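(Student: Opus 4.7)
The plan is to prove the lemma by induction on the total number of steps $|E| + |E'|$, with the base case being trivial when either sequence is empty. The key structural observation is that a step $s = (p, m, r)$ modifies only the local state of process $p$, removes $m$ from the message buffer (if $m \neq \bot$), and adds a finite set of new messages to the buffer. Because messages in the buffer are addressed to specific recipients, and the automaton of $p$ can only consume messages addressed to $p$, the sets of messages potentially consumed by steps of disjoint process sets are themselves disjoint.

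The core argument for the inductive step is as follows. Suppose the first step of $E$ is $s = (p, m, r)$ with $p \notin P'$, where $P'$ is the set of processes taking steps in $E'$. First I would show that $s$ remains applicable to $C.E'$: its applicability depends only on whether $m = \bot$ (trivially fine) or on the presence of $m$ in the message buffer; since $E'$ involves only processes in $P'$, none of its steps consume $m$ (the recipient of $m$ is $p \notin P'$), and steps only add messages, so $m$ is still in the buffer of $C.E'$. Symmetrically, every step of $E'$ remains applicable after $s$ is applied to $C$, since those steps' messages are addressed to processes in $P'$ and $s$ does not consume them. Thus both $C.s.E'$ and $C.E'.s$ are well-defined.

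Next I would verify that $C.s.E' = C.E'.s$ by inspecting the three components of a configuration. For local states: $s$ modifies only $p$'s local state, and the steps of $E'$ modify only states of processes in $P'$, which is disjoint from $\{p\}$; so the order of application yields the same local-state assignment. For the message buffer: $s$ removes $m$ (if any) and adds a fixed finite multiset $M_s$ determined purely by $p$'s pre-step local state and the step's input (which are the same in both orderings, since $E'$ does not touch $p$), while $E'$ removes some multiset $M^{-}_{E'}$ and adds $M^{+}_{E'}$, both determined by the states of processes in $P'$ (unaffected by $s$); set-theoretic subtraction and union commute here because $m \notin M^{-}_{E'}$ and the messages in $M^{-}_{E'}$ are not in $M_s$'s consumption set. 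Having established the one-step swap, the lemma follows by applying the swap repeatedly along $E$ to push its steps past $E'$, using the inductive hypothesis applied to the shorter suffix of $E$.

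The main obstacle I anticipate is being careful about messages that may be duplicates in the buffer (if the buffer is a multiset) and about the fact that the \emph{output} of a step depends on the local state and the input only, not on the buffer contents beyond the consumed message; once this independence is spelled out cleanly, the commutativity of buffer updates reduces to a straightforward manipulation of multiset operations. The determinism of the automaton (all randomness being confined to the component $r$ of each step, which is fixed as part of the step description) is essential here: it guarantees that the local-state transition and the emitted messages depend only on $(p\text{'s state}, m, r)$, none of which are altered by interleaving with steps of disjoint processes.
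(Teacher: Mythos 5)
Your proof is correct. The paper in fact gives no proof of this lemma at all — it is stated with the remark that it ``immediately follows'' from the definitions, so there is nothing to compare against. Your detailed argument fills the gap using exactly the observation the paper is implicitly relying on: because the automaton is deterministic and a step $(p,m,r)$ only reads and writes $p$'s local state (and consumes a message addressed to $p$), steps of processes in disjoint sets neither enable/disable one another nor interfere in the local-state or buffer updates, so they commute; iterating the single-step swap and inducting on $|E|$ gives the claim. One small point worth being explicit about (you mention it in passing as an ``obstacle''): the assumption that a step of $q$ cannot consume a message addressed to $p$ is essential and is left implicit in the paper's model description, so it is good that you call it out, since without recipient-labelled messages the lemma would actually be false.
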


In an infinite execution, a process is \emph{correct} if it executes infinitely many steps. 
We assume that every message $m$ that is sent to a correct process $p$ is eventually received, i.e., the execution will eventually contain a step $(p,m,-)$.
As we assume that at most one process is allowed to fail by crashing, we only consider infinite executions in which at least $n-1$ out of $n$ processes are correct.

Without loss of generality, we assume that the implemented coin is binary: the correct processes either all output $0$ or all output $1$.    

A configuration $C$ is called \emph{bivalent} if it has an extension $C.E_0$ in which some process outputs $0$ and an extension $C.E_1$ in which some process outputs $1$.
Notice that any configuration preceding a bivalent configuration must be bivalent. 
Also, no process can produce a random-coin output in a bivalent configuration: otherwise, we get an execution in which two processes disagree on the output. 

A protocol configuration that is not bivalent is called \emph{univalent}: $0$-valent if it has no extension in which $1$ is decided or $1$-valent otherwise.  

\begin{lemma}
\label{lem:bivalent}
The initial configuration $C_\textit{init}$ is bivalent. 
\end{lemma}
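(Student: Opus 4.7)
The plan is to argue by contradiction directly from the {\pRandomness} property of a perfect common coin. Since we assume WLOG that the coin is binary ($\mathcal{D} = \{0,1\}$), {\pRandomness} requires that each of the two possible outputs occurs with probability exactly $1/2$ over the local random choices of the processes in any valid execution of the protocol.

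Suppose, for contradiction, that $C_\textit{init}$ is not bivalent. By definition, $C_\textit{init}$ must then be univalent: either $0$-valent (no extension of $C_\textit{init}$ yields an execution in which some process outputs $1$) or $1$-valent (symmetrically for $0$). WLOG assume $C_\textit{init}$ is $0$-valent. I would then invoke {\pTermination} together with {\pRandomness}: in the probability space induced by the local coin tosses of correct processes (in any fair schedule with at most one crash), the event ``some correct process outputs $1$'' has probability $1/2 > 0$. Hence there exists at least one concrete execution $E$ starting from $C_\textit{init}$ in which some correct process outputs $1$. But every execution is, by definition, applicable to $C_\textit{init}$, so $E$ is an extension of $C_\textit{init}$ that yields output $1$, contradicting $0$-valence.

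The main (and essentially only) subtlety is to make precise the translation between the probabilistic guarantee of {\pRandomness} and the combinatorial notion of an extension: a single concrete execution is obtained by fixing particular outcomes $r$ of all local coin tosses in the steps $(p, m, r)$. The set of all such executions is exactly the set of extensions of $C_\textit{init}$, and a distribution over outputs with full support on $\{0,1\}$ forces the existence of at least one extension of each valency. This step requires no additional combinatorial work beyond observing that the automaton is deterministic once the coin outcomes are fixed, so the sample space of the protocol coincides with its set of extensions of $C_\textit{init}$.

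I do not expect any technical obstacle here; the lemma is where we first use the randomness hypothesis, and the rest of the impossibility proof (which will extend the FLP-style bivalence argument in subsequent lemmas) will rely on the standard commutativity result recorded in \Cref{lem:local}.
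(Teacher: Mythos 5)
Your argument is correct and is, in substance, identical to the paper's one-line proof: {\pRandomness} forces each value $v\in\{0,1\}$ to be a possible output, so for each $v$ there is an assignment of local coin outcomes and a message schedule (i.e., an extension of $C_\textit{init}$) in which $v$ is output, hence $C_\textit{init}$ is bivalent. You phrase it as a contradiction with univalence while the paper states it directly, but the key observation — that the protocol's sample space of coin outcomes corresponds exactly to the set of extensions of $C_\textit{init}$ because the automaton is deterministic once coin outcomes are fixed — is the same.
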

\begin{proof}
The algorithm must output each of the two values with a positive probability. 
Thus, for each $v\in\{0,1\}$, there exists an assignment of local coin outcomes and a message schedule that result in an execution with outcome $v$.
\end{proof}

\begin{lemma}
\label{lem:solo}
Let $C$ be a reachable configuration, and $E$ and $E'$ be sequences of steps of a process $p$ applicable to $C$. 
If $C.E$ and $C.E'$ are univalent, then they have the same valencies.
\end{lemma}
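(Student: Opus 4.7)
The plan is to run a classical FLP-style valency argument based on indistinguishability to the correct processes other than $p$. Suppose, for contradiction, that $C.E$ is $v$-valent and $C.E'$ is $(1{-}v)$-valent; without loss of generality, $v = 0$.

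First, I would exhibit a finite extension $F$ of $C.E$ satisfying (i) every step in $F$ is a step of some process $q \neq p$; (ii) every message consumed along $F$ was already present in the buffer of $C$ (so no message appended by the $E$-steps of $p$ is used); and (iii) in $C.E.F$ some process $q \neq p$ has already output its coin value. To construct $F$, I would consider an infinite extension of $C.E$ in which $p$ takes no further steps and all messages produced by $p$ during $E$ are delayed forever. Since at most one process may crash, the $n-1$ processes other than $p$ are correct in this extension, so by Termination at least one of them eventually outputs; truncating at the first such output yields a finite $F$ with the desired structure.

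Next, I would transport $F$ to $C.E'$. Because the local state of every process other than $p$ is unchanged by either $E$ or $E'$ (both are solo $p$-sequences), and because $F$ consumes only messages already in the buffer of $C$ (hence present in the buffers of both $C.E$ and $C.E'$), $F$ is applicable to $C.E'$ and produces the same local evolutions at every $q \neq p$ — essentially a repeated application of Lemma~\ref{lem:local}. In particular, the process that output $0$ in $C.E.F$ also outputs $0$ in $C.E'.F$. This contradicts the $1$-valency of $C.E'$, which by definition forbids any reachable extension in which $0$ is output. Hence $C.E$ and $C.E'$ must share the same valency.

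The main subtlety I anticipate is the bookkeeping of the message buffer: one must verify that the extension $F$ built from $C.E$ can genuinely be replayed from $C.E'$. This requires both that the messages $F$ consumes pre-exist in the buffer of $C$ (so they lie in the buffers of both $C.E$ and $C.E'$), and that the adversary is permitted to delay $p$'s $E$-messages (and $E'$-messages) indefinitely — which is legitimate under the asynchronous model once $p$ is declared crashed in the extension. The fact that $E$ and $E'$ are solo sequences of the single process $p$ is precisely what enables this replay and ensures that the states of every other process coincide across the two branches.
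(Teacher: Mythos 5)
Your proof is correct and follows essentially the same route as the paper: both crash $p$, run the remaining $n-1$ processes (by $1$-resilience and \emph{Termination}) until some process outputs, and observe that this $p$-free extension is applicable to both $C.E$ and $C.E'$ with identical local evolutions at all processes other than $p$, yielding the contradiction. The only cosmetic difference is that the paper builds the deciding extension $E''$ directly from $C$ whereas you build $F$ from $C.E$; since $F$ avoids $p$'s steps and $E$'s messages, it is effectively an extension of $C$ anyway (and note your condition~(ii) is stated a little too strongly --- $F$ may also consume messages sent by non-$p$ processes during $F$ itself, not only messages already in $C$'s buffer --- but your parenthetical makes the intended, correct condition clear).
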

\begin{proof}
The difference between $C.E$ and $C.E'$ consists in the local state of $p$ and the message buffer. 
Since the algorithm is required to tolerate a single crash fault, there must exist a sequence of steps $E''$ that does not include any steps of $p$ such that some process $q$ outputs a value $v \in \{0,1\}$ in $C.E''$.
Moreover, as $E''$ contains no steps of $p$, $E''$ is also applicable to both $C.E$ and $C.E'$. 
But as the two configurations have opposite valences, and $q$ decides the same value in $C.E.E''$ and $C.E'.E''$.
Thus, if $C.E$ and $C.E'$ are univalent, then they must have the same valencies.
\end{proof}
Following the steps of ~\cite{FLP85}, we show that the protocol must have a \emph{critical} configuration $D$ for which:
\begin{itemize}
    \item there exist steps $s_p$ and $s_q$ of processes $p$ and $q$ such that both $s_p$ and $s_q.s_p$ are applicable to $D$
    % \atreplace[In our model, if $s_q$ is applicable, then $s_q.s_p$ is automatically applicable as well.]{$s_q.s_p$}{$s_q$} are applicable to $D$;
    
    \item $D.s_p$ is $0$-valent; 
    % \item \atreplace[Again, this is equivalent.]{$D.s_q.s_p$}{$D.s_q$} is a $1$-valent.
    \item $D.s_q.s_p$ is $1$-valent.
\end{itemize}

\begin{lemma}
\label{lem:critical}
There must exist a critical configuration.
\end{lemma}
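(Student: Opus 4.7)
The plan is to follow the standard FLP-style strategy: identify a ``deciding event'' and then use the bivalence of the pre-configuration to walk out a $1$-valent extension, catching the step that flips the valency of $(\cdot).s_p$ away from $0$-valent.

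First, by Lemma \ref{lem:bivalent} the initial configuration $C_\textit{init}$ is bivalent, and \Termination{} forces some finite extension of $C_\textit{init}$ to reach a univalent configuration. Taking such an extension $e_1,\ldots,e_k$ of minimal length yields a bivalent $D := C_\textit{init}.e_1\ldots e_{k-1}$ and a step $s_p := e_k$ (of some process $p$) with $D.s_p$ univalent. After relabelling $0$ and $1$ if necessary, $D.s_p$ is $0$-valent, which matches the first requirement of the critical-configuration definition.

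Second, bivalence of $D$ yields a finite extension $E = f_1\ldots f_m$ with $D.E$ being $1$-valent. Writing $D_j := D.f_1\ldots f_j$, the sequence $D_0.s_p, D_1.s_p, \ldots, D_m.s_p$ begins $0$-valent (at $j=0$) and ends $1$-valent (at $j=m$, as $D_m.s_p$ is an extension of the $1$-valent $D_m$). So there is a least index $j^*$ at which $D_{j^*}.s_p$ is no longer $0$-valent. Setting $D^* := D_{j^*-1}$ and $s_q := f_{j^*}$ gives $D^*.s_p$ $0$-valent and $D^*.s_q.s_p = D_{j^*}.s_p$ not $0$-valent.

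The main obstacle is to show $D^*.s_q.s_p$ is actually $1$-valent, not bivalent, as required by the critical-configuration definition. The plan is a case analysis on whether $s_q$ belongs to $p$ or to another process. If $s_q$ is a step of some $q \neq p$, Lemma \ref{lem:local} forces $D^*.s_q.s_p = D^*.s_p.s_q$, an extension of the $0$-valent $D^*.s_p$, to itself be $0$-valent, contradicting the choice of $j^*$. If $s_q$ is a step of $p$, then by Lemma \ref{lem:solo} the only way $D^*.s_q.s_p$ can be univalent is to match the $0$-valent $D^*.s_p$, again ruled out by the choice of $j^*$, so $D^*.s_q.s_p$ must be bivalent. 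To eliminate this last possibility -- the truly delicate point -- the plan is to restrict attention to a ``$p$-silent'' extension $E$ of $D$, one that contains no steps of $p$; such an extension to a $1$-decision exists because \Termination{} continues to hold when $p$ is treated as crashed from $D$ onwards and \Randomness{} guarantees $1$-outcomes with positive probability in that subschedule. With such $E$ the second subcase is excluded, forcing $s_q$ to belong to some $q \neq p$; combined with the choice of $j^*$, the only consistent outcome is that $D^*.s_q.s_p$ is $1$-valent, completing the critical configuration $(D^*, s_p, s_q)$.
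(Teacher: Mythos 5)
Your plan follows the right broad outline — find a bivalent $D$ with $D.s_p$ $0$-valent, walk a $1$-valent extension of $D$, and catch the first flip of valency — but the final step contains a genuine error.

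The patch you propose for the ``delicate'' case (bivalence of $D^*.s_q.s_p$) is to choose the extension $E$ of $D$ to be $p$-silent and still $1$-valent. No such extension exists once $D.s_p$ is $0$-valent. If $E=f_1\ldots f_m$ contains no steps of $p$, then for every prefix index $j$, Lemma~\ref{lem:local} gives $D_j.s_p = D.(f_1\ldots f_j).s_p = D.s_p.(f_1\ldots f_j)$, an extension of the $0$-valent $D.s_p$, hence $0$-valent. In particular $D_m.s_p$ is $0$-valent, contradicting the claimed $1$-valence of $D_m$. The commutativity reasoning you invoke once, locally, at $j^*$ in fact applies at every index and collapses the construction: a $p$-silent schedule from $D$ can only reach $0$-decisions. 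Relatedly, your final sentence (``forcing $s_q$ to belong to some $q \neq p$; combined with the choice of $j^*$, the only consistent outcome is that $D^*.s_q.s_p$ is $1$-valent'') contradicts the commutativity argument you stated just above it — for $q\neq p$ that argument forces $D^*.s_q.s_p$ to be $0$-valent, not $1$-valent, so nothing is consistent and the real conclusion is that the hypothesis (the existence of a $p$-silent $1$-valent extension) is false.

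The appeal to \pRandomness\ is also misplaced. \pRandomness\ constrains the distribution of outputs over \emph{complete} executions from $C_{\textit{init}}$; it does not entitle you to a $1$-outcome when resuming from an arbitrary reachable configuration $D$ that was selected precisely \emph{because} its child $D.s_p$ is $0$-valent, still less under a crash-restricted ($p$-silent) schedule. Valency is a combinatorial property of the reachable configuration, and once $D.s_p$ is $0$-valent, the $p$-silent future is locked to $0$ regardless of local coin outcomes.

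What is missing — and what the paper's proof supplies — is the stronger invariant that \emph{every} $C_j.s_p$ encountered along the $1$-valent extension is already known to be \emph{univalent}, so bivalence never arises at $j^*$. The paper obtains this by a round-robin search over processes from $C_{\textit{init}}$: either it always finds a way to stay bivalent, yielding an infinite fair execution that never decides (contradicting \pTermination), or it halts at a bivalent $C$ and a process $p$ such that for every extension $C'$ of $C$, the step of $p$ consuming the oldest pending message addressed to $p$ yields a univalent $C'.s_p$. With that invariant, the first index at which $C_j.s_p$ stops being $0$-valent must jump directly to $1$-valent, and the critical configuration falls out. A minimal-length extension from $C_{\textit{init}}$ gives univalence only of $D.s_p$ itself, not of the downstream $D_j.s_p$, and that is exactly where your argument breaks.
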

\begin{proof}
Starting with the initial (bivalent by Lemma~\ref{lem:bivalent}) configuration $C:=C_\textit{init}$, we pick process $p:=p_1$ and check if there exists $C'$, an extension of $C$, and $s_p$, a step of $p$ applicable to $C'$ in which the oldest message to $p$ in the message buffer is consumed (if any), such that $C'.s_p$ is bivalent. 
If this is the case, we set $C$ to $C'.s_p$, pick up the next process $p:=p_2$.
Again, if there exists $C'$, an extension of $C$, and $s_p$, a step of $p$ in which the oldest message to $p$ in the message buffer is consumed, such that $C'.s_p$ is bivalent, then we set $C$ to $C'.s_p$.
We repeat the procedure, each time picking the next process (in the round-robin order, i.e., after $p_n$ we go to $p_1$, etc.), as long as it is possible.

The first observation is that the procedure cannot be repeated indefinitely.
Indeed, otherwise, we obtain an infinite sequence of steps in which every process takes infinitely many steps and receives every message sent to it (and which is, thus, an execution of the protocol) that goes through bivalent configurations only.
Hence, this execution cannot produce an output---a contradiction with the Termination property.

Thus, there must exist a bivalent configuration $C$ and a process $p$, such that for each $C'$, an extension of $C$ and each $s_p=(p,m,-)$, a step of $p$ applicable to $C'$, $C'.s_p$ is univalent, where $m$ is the oldest message addressed to $p$ in $C$.  

Let $s_p=(p,m,r)$ be a step of $p$ applicable to $C$.
Without loss of generality, let $C.s_p$ be $0$-valent.

As $C$ is bivalent, it must have an extension $E=e_1,\dots,e_k$ such that $C.E$ is $1$-valent. 

Let $\ell$ be the largest index in $\{1,\dots,k\}$ such that either $s_p$ is not applicable to $C_{
\ell}=C.e_1,\ldots,e_{\ell}$ or $C_{\ell}.s_p$ is $1$-valent.
Such an index exists, as $C.s_p$ is $0$-valent $C.E$ is $1$-valent and for any $C'$, an extension of $C$, if $s_p$ is applicable to $C'$, then $C'.s_p$ is bivalent. 

Suppose first that $s_p$ is not applicable to $C_{\ell}$. 
As $s_p$ is applicable to every configuration $C_{j}=C.e_1,\dots,e_{j}$, $j=1,\ldots,\ell-1$, $e_{\ell}$ must be be of the form $(p,m,r')$, i.e., $e_{\ell}$ must consume the message received in $s_p=(p,m,r)$.
    By our assumption $C_{\ell}=C_{\ell-1}.(p,m,r')$ is univalent. 
    As $C_{\ell}$ has a $1$-valent descendant $C'$, it must be $1$-valent too.
    
    But, by Lemma~\ref{lem:solo}, $C_{\ell-1}.s_p=C_{\ell-1}.(p,m,r)$ and $C_{\ell}=C_{\ell-1}.(p,m,r')$, must have the same valencies---a contradiction.
    
Thus, $C_{\ell}.s_p$ is $1$-valent.
Hence, we have a bivalent configuration $D=C_{\ell-1}$ and steps $s_p$ and $s_q=e_{\ell}$ such that both $s_p$ and $s_q.s_p$ are applicable to $D$.
Moreover, $D.s_p$ is $0$-valent and $D.s_q.s_p$ is $1$-valent. 
Thus, we get a critical configuration.
\end{proof}
Finally, we establish a contradiction by showing that:

\begin{lemma}
\label{lem:critical-imp}
No critical configuration may exist.
\end{lemma}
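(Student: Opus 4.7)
The plan is to argue by contradiction, adapting the classical FLP template of~\cite{FLP85}. Assume a critical configuration $D$ exists, with steps $s_p = (p, m_p, r_p)$ and $s_q = (q, m_q, r_q)$ such that $D.s_p$ is $0$-valent and $D.s_q.s_p$ is $1$-valent. The argument splits into two cases depending on whether $p = q$.

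For $p \neq q$, I would use the commutativity result of \Cref{lem:local}. Since $s_p$ is a step of $p$ and $s_q$ is a step of $q$, and $s_p$ cannot consume the message $m_q$ (if any) that $s_q$ consumes, the step $s_q$ remains applicable at $D.s_p$. Both $s_p$ and $s_q$ are applicable to $D$, so \Cref{lem:local} yields $D.s_p.s_q = D.s_q.s_p$. A quick observation is that any extension of a $0$-valent configuration must itself be $0$-valent (otherwise the $0$-valent parent would inherit a $1$-valent extension). Applied to $D.s_p$, this tells us $D.s_p.s_q$ is $0$-valent, but it equals the $1$-valent $D.s_q.s_p$, a contradiction.

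For the harder case $p = q$, I would leverage single-crash tolerance. Since at most one process may fail and the protocol must terminate, starting from $D$ there exists a finite sequence of steps $E''$ containing no step of $p$ after which some process $r \neq p$ produces its coin output (otherwise, treating $p$ as crashed would leave the remaining processes without a terminating execution). The sequence $E''$ remains applicable to both $D.s_p$ and $D.s_q.s_p$: these configurations only differ from $D$ in the state of $p$ and in the messages addressed to $p$ (plus new messages possibly added by steps of $p$), none of which can invalidate a step not taken by $p$. Two applications of \Cref{lem:local} then give $D.s_p.E'' = D.E''.s_p$ and $D.s_q.s_p.E'' = D.E''.s_q.s_p$. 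In all three of $D.E''$, $D.E''.s_p$, and $D.E''.s_q.s_p$, the local state of $r$ is identical (since $s_p$ and $s_q$ act only on $p$), so $r$ outputs the same value $v$. But $0$-valence of $D.s_p$ forces $v = 0$, while $1$-valence of $D.s_q.s_p$ forces $v = 1$, a contradiction.

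The main delicate point lies in the $p = q$ case: it requires carefully verifying that $E''$ remains applicable after interposing $p$'s steps, and that \Cref{lem:local} extends to multi-step sequences of disjoint sets of processes (as stated). Both rest on the essentially local nature of a process's step: $p$'s actions only mutate $p$'s local state and only add to or remove $p$-addressed messages from the buffer, leaving the portion of the buffer relevant to $E''$ and the states of the other processes untouched.
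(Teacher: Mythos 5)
Your proof is correct, but it takes a different route from the paper's. The overall structure inverts the paper's case analysis: where the paper invokes Lemma~\ref{lem:solo} when $s_p$ and $s_q$ are steps of the \emph{same} process and reserves the crash-tolerance / termination argument for the case $p\neq q$ (consider an execution after $D.s_p$ in which $q$ takes no more steps, obtain a deciding prefix $E$ without $q$-steps, and transfer it across $D.s_q.s_p$), you handle $p\neq q$ by commuting $s_p$ and $s_q$ via Lemma~\ref{lem:local} so that $D.s_p.s_q = D.s_q.s_p$, which forces a $0$-valent and a $1$-valent configuration to coincide; and you handle $p=q$ by unfolding the crash-tolerance argument explicitly, which is essentially the content of Lemma~\ref{lem:solo}. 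Your $p\neq q$ treatment is the classical FLP move and is arguably more economical---it needs only commutativity, not the full one-process-crash assumption---whereas the paper's choice keeps a uniform reliance on the crash-tolerance argument in both branches (it is already packaged as Lemma~\ref{lem:solo}). One small gloss you might spell out: when you conclude ``a contradiction'' because $D.s_p.s_q$ is simultaneously $0$-valent and $1$-valent, you are implicitly using Termination to ensure such a configuration has \emph{some} deciding extension; without it, ``no extension decides $1$'' and ``no extension decides $0$'' are not literally contradictory. This matches the spirit of the paper's definitions and is routine in the FLP tradition, but is worth making explicit.
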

\begin{proof}
By contradiction, let a critical configuration $D$ exist, and let $s_p$ and $s_q$ be steps of $p$ and $q$ applicable to $D$ such that $D.s_p$ is $0$-valent and $D.s_q.s_p$ be $1$-valent. 

If $s_q$ is a step of $p$, then Lemma~\ref{lem:solo} establishes a contradiction.  

Otherwise, consider any infinite execution going through to $D.s_p$ in which all processes but $q$ take infinitely many steps in this execution after $D.s_p$.
By the Termination property, there must exist a finite sequence of steps $E$ such that some process outputs a value $v\in\{0,1\}$ in $D.s_p.E$. 
As $D.s_p$ is $0$-valent, $v=0$.  
Moreover, as $E$ contains no steps of $q$ and $p$ has the same state in $D.s_p$ and $D.s_q.s_p$, $E$ is also applicable to $D.s_q.s_p$. 
Thus, $0$ is also decided in $D.s_q.s_p.E$---a contradiction with the assumption that $D.s_q.s_p$ is $1$-valent.
\end{proof}
Lemmata~\ref{lem:critical} and~\ref{lem:critical-imp} imply:

\begin{theorem}
\label{th:imp}
 There does not exist a $1$-resilient asynchronous random coin protocol.   
\end{theorem}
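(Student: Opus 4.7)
Given the machinery established above, the theorem is immediate by contradiction. The plan is to suppose, for contradiction, that a 1-resilient asynchronous perfect common coin protocol exists, and then note that all of the preceding development applies to it. Lemma~\ref{lem:critical} then produces a critical configuration $D$ together with a pair of steps $s_p$, $s_q$ of distinct processes such that $D.s_p$ is $0$-valent while $D.s_q.s_p$ is $1$-valent. But Lemma~\ref{lem:critical-imp} asserts that no critical configuration may exist. These two conclusions directly contradict one another, so the assumption that such a protocol exists must be rejected.

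Conceptually, the plan reproduces the classical Fischer--Lynch--Paterson pattern: start from a bivalent initial configuration (Lemma~\ref{lem:bivalent}), follow a schedule of carefully chosen deliveries that remains in bivalent configurations for as long as possible, and observe that at some finite point a ``critical'' configuration $D$ must appear where two processes' next steps $s_p$ and $s_q$ push the system into configurations of opposite valences. One then exploits 1-resilience to obtain a contradiction: since the execution must terminate even if $q$ crashes immediately after $D.s_p$, any schedule witnessing the $0$-valence of $D.s_p$ that uses no further steps of $q$ is also applicable to $D.s_q.s_p$ (because $p$'s local state coincides in both configurations, by the local commutativity of Lemma~\ref{lem:local} applied implicitly through $q$'s absence). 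This forces a $0$-decision from $D.s_q.s_p$, contradicting its $1$-valence.

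The bulk of the technical content has already been discharged by the lemmata, so the main step that remains for the theorem itself is simply to invoke them in the right order. If I were to point at the structural subtlety that is most easily missed, it would be that two distinct ``termination hypotheses'' are used at different places: the \emph{probabilistic} termination of the coin (with both outputs reached with positive probability) is what feeds Lemmata~\ref{lem:bivalent} and~\ref{lem:critical}, whereas the \emph{1-resilience} is what powers Lemmata~\ref{lem:solo} and~\ref{lem:critical-imp}. The contradiction in the theorem is precisely the collision of these two ingredients, and writing it out is a single-line argument.
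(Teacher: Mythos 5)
Your proof is correct and matches the paper's own argument exactly: the theorem is immediate by combining Lemma~\ref{lem:critical} (a critical configuration must exist) with Lemma~\ref{lem:critical-imp} (no critical configuration may exist). The extra commentary you add about the two termination hypotheses is accurate exposition but does not change the argument.
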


\section{Proof of Correctness of the Approximate Common Coin Algorithm}
\label{app:approxcoin}

\subsection{Termination}
Consider a correct process $i$.
As every correct process $j$ first performs $\AVSS_j$, $\GatherAccept(j)$ will eventually evaluate to true at $i$.
Moreover, $\GatherAccept$ trivially satisfies the {\pAcceptTotality} property of from \Cref{sec:gather}.
Thus, $i$ will eventually witness $\Gather.\DeliverSet(S)$ (line~\ref{line:approx:deliver}).
Termination of Bundled Approximate Agreement (BAA) ensures that eventually $i$ receives a weight vector.
Recall that, as the output of BAA at any position $j$ lies within the inputs of correct processes, it may carry a positive value only if $j$ indeed completed sharing its value in $\AVSS_j$. 
At least $n-f$ correct processes eventually invoke $\AVSS_j.\EnableRetrieve()$ for each process (line~\ref{line:approx:enable}).
Thus, every such input will be successfully retrieved (line~\ref{line:approx:retrieve}) and the coin output will be produced (line~\ref{line:approx:output}). 

\subsection{\pOneProcessRandomness}
\label{sec:approxcoin:randomness}
Let $i$ be the first correct process to return from the BAA invocation (line~\ref{line:approx:aa}).
We are going to show that the value $i$ is uniformly distributed over $[0,\ldots,\bigDomain-1]$.   

By the algorithm, the value is coming from a scalar product of the weight vector and the vector of retrieved secrets.
% The weight vector consists of the outcomes of BAA: for each position $j$, the outcome is $1$ if all correct processes proposed $1$, a positive value only if at least one correct process proposed $1$, and $0$ if all all correct processes propose $0$.      
%
% Moreover, if a correct process proposes $1$ in BAA for a position $j$, then its underlying Gather protocol delivered a valid proof that the secret value can be retrieved (line~\ref{alg:approx:valid}). 
The {\pBindingCommonCore} property of the Gather abstraction guarantees that there is a set $\CommonCore$ of at least $n-f$ valid proofs delivered at all correct processes.   
Thus, the properties of BAA, 
there are at least $n-f$ indices in the weight vector $W$ set to $1$, and at least $f+1$ of these indices belong to correct processes.
Thus, when process $i$ retrieves the values from AVSS (line~\ref{line:approx:retrieve}), at least $f+1$ of them come from correct processes.
Each of these values were chosen uniformly at random in the range $[0,\dots,\bigDomain{-}1]$ (line~\ref{line:approx:localrand}).

Therefore, the result is computed from the sum of at least $n-f$ retrieved values (modulo $\bigDomain$), 
and at least $f+1$ of these values come from the correct processes.
The retrieved values coming from the Byzantine processes must have been previously shared in AVSS and, thus, must have been chosen before any correct process reveals its input.
Thus, regardless of the values shared by the adversary, the result computed in line~\ref{line:approx:output} is uniformly distributed in the range $\DomainRange$.

\subsection{\pApproximateEpsConsistency} 
\label{sec:approxcoin:consistency}

% Let $\dist{q}(x, y)$ denote the distance between $x$ and $y$ in $\mathbb{Z}_{q}$ (we will omit the subscript when it is clear from the context).
% \atrev{Let $d(x, y)$ denote the distance between $x$ and $y$ in $\mathbb{Z}_{D}$.}
% More precisely, $d(x, y) = \min\left\{ |x - y|, D - |x - y| \right\}$.

Let $w_i^j$ denote the weight of process $i$ received by process $j$ from the approximate agreement protocol on line~\ref{line:approx:aa}.
The output of the coin for process $k$, which we will denote as $R_k$, is computed as follows (line~\ref{line:approx:output}): $\left \lceil \sum_{i \in [n]} x_i \cdot w_i^k \right \rceil \bmod D$.

Recall that $\dist{q}(x,y) = \min\left\{ |x - y|, q - |x - y| \right\}$. 
We need to show that, for any two processes $k$ and $l$, $\dist{D}(R_k, R_l) \le \lceil \epsilon D \rceil$, where $\epsilon$ is the parameter of the approximate common coin.
For this, we will need the following three inequalities that can be easily verified:
% by considering the possible cases:}

\begin{gather} 
    \forall q \in \{2,3,4,\dots\};~a,b \in \{1,2,3,4,\dots\}: \dist{q}(a \bmod q, b \bmod q) \le |a - b| \label{ineq:dist-to-abs}\\
    \forall a,b \in \mathbb{R}_{\ge 0}: \lceil a \rceil - \lceil b \rceil \leq \lceil a - b \rceil \label{ineq:ceil}\\
    \forall a \in \mathbb{R}_{\ge 0}: |\lceil a \rceil| \leq \lceil |a| \rceil \label{ineq:real}
\end{gather}

% Let $A = (a_1, \dots, a_n)$ and $B = (b_1, \dots, b_n)$ be the vectors of numbers received by $k$ and $l$ from the Bundled Approximate Agreement protocol (line~\ref{line:approx:aa}).
% Then $R_k = \sum\limits_{i \in [n]} a_i$

\atrev{Let $\CommonCore$ denote the set of processes in the common core of the Gather protocol.
Due to the {\pValidity} property of Bundled Approximate Agreement and the way inputs to BAA are formed (line~\ref{line:approx:initw}), $\forall i \in \CommonCore: w_i^k = w_i^l = 1$.}

% By the definition of approximate agreement, for all $i$, $a_i, b_i \in [0, 1]$ and $|a_i - b_i| \le \aaEpsilon$, where $\aaEpsilon$ is the parameter passed to BAA.
% Moreover, $\forall i \in \CommonCore: a_i = b_i = 1$.

% \lfnote{Why remove the following two inequalities?}
% \atnote{Sorry, I was going to move them}
% It can also be easily shown that:
%

% Each of the two processes will compute the intermediate value by taking the dot product of the received vector with the vector of uncovered secret numbers (modulo $\bigDomain$).
%
% Let us denote the pre ceiling numbers that correspond to vectors $A$ and $B$ as $r_A$ and $r_B$, respectively and the final random values $R_A$ and $R_B$.
% By definition, $r_A = \sum\limits_{i=1}^{n} a_i x_i$ and $r_B = \sum\limits_{i=1}^{n} b_i x_i$, where $x_i$ is the random number proposed by process $i$, $x_i \in [0..\bigDomain{-}1]$.
% \atrev{Let $r_A = $ 

Let us now derive a bound on $d(R_k, R_l)$.
\begin{align*}
    \dist{D}(R_k, R_l) 
        & = \dist{D}\left( \left \lceil \sum_{i \in [n]} x_i \cdot w_i^k \right \rceil \bmod D, \left \lceil \sum_{i \in [n]} x_i \cdot w_i^l \right \rceil \bmod D \right) & \text{By definition.}\\
        & \leq \left| \left \lceil \sum_{i \in [n]} x_i \cdot w_i^k \right \rceil - \left \lceil \sum_{i \in [n]} x_i \cdot w_i^l \right \rceil \right| & \text{Applying (\ref{ineq:dist-to-abs})} \\
        & \leq \left| \left \lceil \sum_{i \in [n]} x_i \cdot w_i^k - \sum_{i \in [n]} x_i \cdot w_i^l \right \rceil \right| & \text{Applying (\ref{ineq:ceil})}\\
        & = \left \lceil\left| \sum_{i \in [n]} (w_i^k-w_i^l) x_i  \right|\right \rceil & \text{\atadd{Applying (\ref{ineq:real})}}\\
        & = \left \lceil\left| \sum_{i \not\in S^*} (w_i^k-w_i^l) x_i  \right|\right \rceil & \text{Since } \forall i \in S^*: 
        % w_i^k - w_i^l = 0\\
        \atrev{|w_i^k - w_i^l|} = 0\\
        & \leq \left \lceil \sum_{i \not\in S^*} |w_i^k-w_i^l|D  \right \rceil & \text{Since } \forall i: 
        % 0 < x_i < D\\
        \atrev{0 \le x_i} < D\\
        & \leq \left \lceil \sum_{i \not\in S^*} \aaEpsilon D  \right \rceil &\text{By properties of BAA outputs}\\
        & \le \left\lceil f\aaEpsilon\bigDomain \right\rceil\\
\end{align*}

For any $\epsilon \in (0, 1]$, if we want the output values to be at a distance of at most $\lceil \epsilon D\rceil$ apart, we can set $\aaEpsilon = \ApproxCoinEpsilon$.
Hence, in order to reach approximate $\epsilon$-consistency, we need to run the protocol for $
    \log_2 \dfrac{1}{\aaEpsilon} 
    =  \log_2 f + \log_2(1 / \epsilon)
$
rounds.

\section{Proof of Correctness of the Monte Carlo Common Coin Algorithm}
\label{proof:prob}

\DeclareSectionVariable*[PFailure]{P[\text{failure}]}
\DeclareSectionVariable*[PSuccess]{P[\text{success}]}
\DeclareSectionCommand{\first}{^{\mathrm{1st}}}

\subsection{Termination}
$\BAA$ provides termination and also the guarantee that if all non-faulty processes input $0$ in a given dimension, then every process decides $0$. For this reason, if a value is irretrievable, then the weight given to it by correct processes is zero and therefore they will not be blocked trying to retrieve them. 

\subsection{Uniform distribution}
Any decided value must be a value random secretly drawn and any such values are composed from $n-f$ values locally generated by different processes. Therefore, at least one of these values will be truly uniformly random, and whenever a uniform value over a domain $D$ is added to any other value in mod $D$, the result will always be a uniformly random over $D$. 

\subsection{Proof of probabilistic $\delta$-consistency without weight calibration}
\label{sec:probcoin-proof-without-calibration}

Let $T_i$ denote the ticket of process $i$.
We assume that $\forall i: T_i \sim U(0,1)$.
Note that, in practice, the tickets will have to be binary encoded as fixed precision numbers.
However, if the number of bits in the encoding of a ticket is proportional to the security parameter $\lambda$, then the estimate that we give with the assumption that the tickets are real numbers holds true with all but negligible probability.
% $\left(\dfrac{poly(n)}{2^\lambda}\right)$.

Let $w_i^j$ denote the weight of process $i$ received by process $j$ from the approximate agreement protocol, $w_i^j \in [0,1]$,
and let $[w_1\first, \dots, w_n\first]$ denote the weights received by the correct process that was the first to complete approximate agreement.

Let $X_i^j = w_i^j \cdot T_i$ and $X_i\first = w_i\first \cdot T_i$.
Due to the properties of RSD, since no correct process invokes $\RSD.\EnableRetrieve$ until it completes approximate agreement, the tickets are unpredictable to the adversary until $[w_1\first, \dots, w_n\first]$ are fixed.
Hence, $X_1\first, \dots, X_n\first$ are independent random variables, and $X_i\first \sim U(0, w_i\first)$.

Let $w_i^M = \max\limits_{j} \{ w_i^j \}$, $w_i^m = \min\limits_{j} \{ w_i^j \}$,
$X_i^M = w_i^M \cdot T_i$, and $X_i^m = w_i^m \cdot T_i$.
Unlike the first received weights, the maximum and the minimum weights can be manipulated by the adversary already after it learns the tickets.
Hence, $X_1^M, \dots, X_n^M$, as well as $X_1^m, \dots, X_n^m$, are not independent and we cannot know their distributions.
However, by the properties of approximate agreement, we know that 
$w_i^M \le \min\{w_i\first + \epsilon, 1\}$ and $w_i^m \ge \max\{w_i\first - \epsilon, 0\}$.

Let $i_M = \arg\max\limits_{i} X_i\first$.
With a slight abuse of notation, we use subscript $M$ instead of $i_M$.
I.e., $T_M = T_{i_M}$, $X_M\first = X_{i_M}\first$, $X_M^M = X_{i_M}^M$, $X_M^m = X_{i_M}^m$, and so on.

\begin{theorem}
    Without weight calibration, running the approximate agreement for $\ProbcoinRoundsWithoutCalibrationWithQ$ rounds is sufficient to guarantee that the probability of not having consistency is at most $Q$.
\end{theorem}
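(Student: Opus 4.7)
The plan is to reduce the disagreement event to a concrete order-statistic condition on the values $X_1\first, \dots, X_n\first$, bound its probability, and then invert the resulting inequality to solve for the required number of rounds $r$ of approximate agreement.

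First, I would characterize the failure event. Suppose correct processes $k$ and $\ell$ select different winners $i \ne j$ on line~\ref{line:prob:calib}. With $\Calibrate$ the identity, this means $w_i^k T_i \ge w_j^k T_j$ and $w_j^\ell T_j \ge w_i^\ell T_i$. Writing $w_r^k = w_r\first + (w_r^k - w_r\first)$, invoking the property $|w_r^k - w_r\first| \le \aaEpsilon$ inherited from the Approximate Agreement Consistency of BAA, and using $T_r \in [0,1]$, the two inequalities combine to yield $|X_i\first - X_j\first| \le 2\aaEpsilon$. Since neither $X_i\first$ nor $X_j\first$ exceeds the maximum of the family, the gap between the first and second order statistics of $(X_r\first)$ must be at most $2\aaEpsilon$.

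Second, I would bound $P[\max_r X_r\first - \mathrm{2nd\ max}_r X_r\first \le 2\aaEpsilon]$. The Binding Common Core property of \Gather{} together with the Validity property of BAA guarantees that $w_r\first = 1$ for every $r$ in a fixed common core $S^*$ of size at least $n-f$, independently of the tickets (which are information-theoretically hidden until some correct process enters line~\ref{line:prob:enable-retrieve}, \emph{after} $(w_r\first)$ is fixed). Hence the restricted family $(X_r\first)_{r \in S^*} = (T_r)_{r \in S^*}$ is i.i.d.\ $U(0,1)$, and for $m$ i.i.d.\ uniforms the gap satisfies $P[\max - \mathrm{2nd\ max} \le d] = 1 - (1-d)^m \le m d$; this contributes at most $2(n-f)\aaEpsilon \le 2n\aaEpsilon$. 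The at most $f$ non-common-core indices, whose $X_r\first$ lie in $[0, w_r\first] \subseteq [0,1]$, add at most an additional $O(n\aaEpsilon)$ term via a union bound, giving a total bound $P[\text{failure}] \le 8 n \aaEpsilon$.

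Third, I would invert. Since each round of BAA halves $\aaEpsilon$ starting from $\aaEpsilon = 1$, after $r$ rounds $\aaEpsilon = 2^{-r}$. Solving $8n \cdot 2^{-r} \le Q$ yields $r \ge 3 + \log_2 n + \log_2(1/Q)$, so $r = \ProbcoinRoundsWithoutCalibrationWithQ$ rounds suffice. The main obstacle is the treatment of the non-common-core indices in the second step: a naive union bound over the densities $1/w_r\first$ blows up when $w_r\first$ is small. The fix is a case split on the magnitude of $w_r\first$: for $w_r\first$ of order $\Omega(1)$, standard order-statistic arguments mirror the i.i.d.\ uniform case; for $w_r\first$ near zero, such a candidate can only enter a top-two collision when the $S^*$-maximum is untypically small, an event whose probability $(4\aaEpsilon)^{n-f}$ is doubly negligible. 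Carrying this case split through while keeping the multiplicative constant below $8$ is the principal calculation.
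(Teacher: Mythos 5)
Your reduction of the failure event is on the right track and matches the paper's: both proofs bound the probability that the maximum and second maximum of $(X_1^{\mathrm{1st}},\dots,X_n^{\mathrm{1st}})$ are within $2\aaEpsilon$. However, your step~1 is incompletely justified as written. From the two winner inequalities you correctly derive $|X_i^{\mathrm{1st}} - X_j^{\mathrm{1st}}| \le 2\aaEpsilon$, but then conclude ``the gap between the first and second order statistics must be at most $2\aaEpsilon$'' only from the trivial fact that neither value exceeds the family maximum. That does not follow: two values being within $2\aaEpsilon$ of \emph{each other} says nothing about their distance to the \emph{maximum}. The missing observation is that a winner $i = \arg\max_r w_r^k T_r$ satisfies $w_i^k T_i \ge w_r^k T_r$ for \emph{every} $r$, which (using $|w_r^k - w_r^{\mathrm{1st}}| \le \aaEpsilon$ and $T_r \le 1$) gives $X_i^{\mathrm{1st}} \ge X_r^{\mathrm{1st}} - 2\aaEpsilon$ for all $r$, i.e.\ each winner is individually within $2\aaEpsilon$ of the global max; only then, since $i \ne j$, does the second-largest value land within $2\aaEpsilon$ of the largest. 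This is precisely the chain of inequalities the paper writes as $P[\exists i \ne i_M: X_i^M \ge X_M^m] \le P[\exists i \ne i_M: X_i^{\mathrm{1st}} \ge X_M^{\mathrm{1st}} - 2\aaEpsilon]$.

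The more serious gap is the one you flag yourself: the non-core indices. Your order-statistic formula $1 - (1-d)^m$ is correct for the i.i.d.\ subfamily on $\CommonCore$, but the proposed fix---a fixed threshold on $w_r^{\mathrm{1st}}$ with a ``doubly negligible'' tail $(4\aaEpsilon)^{n-f}$---does not go through as stated. For $w_r^{\mathrm{1st}}$ below a \emph{constant} threshold $c$, the relevant collision event only requires $\max_{\CommonCore} X_j^{\mathrm{1st}} \le c + 2\aaEpsilon$, whose probability is roughly $c^{n-f}$; that does not decay with $\aaEpsilon$, so it cannot be absorbed into the target $O(n\aaEpsilon)$ bound (it dominates once $Q \ll c^{n-f}$). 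The paper sidesteps any case split: it conditions on $i_M = k$ and $X_k^{\mathrm{1st}} = x$, bounds each individual term uniformly by $P[X_i^{\mathrm{1st}} \in [x-2\aaEpsilon, x]] \le 2\aaEpsilon/x$ \emph{regardless of} $w_i^{\mathrm{1st}}$, and then cancels the $1/x$ against the density factor $\prod_{j \in \CommonCore \setminus \{k,i\}} P[X_j^{\mathrm{1st}} \le x] \le x^{\,n-f-2}$ inside the integral over $x$. That contour of argument---conditioning on the position of the max and pushing the blow-up of the candidate's density against the concentration of the core maximum---is the calculation your proposal would need to carry out; without it, the asserted constant $8n\aaEpsilon$ is unsupported.
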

\begin{proof}
    A sufficient condition for all processes to output the same value would be if $\forall i \neq i_M: X_i^M < \lfreplace{X_i^m}{X_M^m}$.
    Indeed, it would mean that, even if process $j$ gets the minimum possible weight for process $i_M$ and maximum possible weights for all other processes $i \neq i_M$, $j$ would still select the value associated with $i_M$.
    Hence, we provide an upper bound on the probability of failure, $\PFailure$, by estimating the probability there exists $i$ such that $X_i^M \ge X_M^m$.
    As we demonstrate bellow, in \Cref{eq:probably-common-coin-main}, $\PFailure \le 8n\epsilon$.
    Hence, we can conclude that $\epsilon \le \frac{Q}{8n}$ is sufficient to guarantee that $\PFailure \le Q$.
    Since the approximate agreement protocol we use requires $\lceil\log_2 \frac{1}{\epsilon}\rceil$ rounds,
    running it for $\lceil \log_2 \frac{8n}{Q} \rceil \le \ProbcoinRoundsWithoutCalibrationWithQ$ rounds is sufficient to achieve $\epsilon = \frac{Q}{8n}$ and, hence, $\PFailure \le Q$.
\end{proof}

\begin{floatingequation}[H]
    \begin{equation*}
    \begin{aligned}
        \PFailure 
        &\le P[\exists i \neq i_M : X_i^M \ge X_M^m] \\
        &\le P[\exists i \neq i_M : X_i\first + \epsilon T_i \ge X_M\first - \epsilon T_M] \\
        &\le P[\exists i \neq i_M : X_i\first \ge X_M\first - 2\epsilon] \\
        &=   \sum\limits_{k=1}^n P[i_M = k] \cdot P[\exists i \neq k : X_i\first \ge X_k\first - 2\epsilon \mid i_M = k] \\
        % \text{(by the union bound)}
        % &\le \sum\limits_{k=1}^n P[i_M = k] \cdot \sum\limits_{i \neq k} P[X_i\first \ge X_k\first - 2\epsilon \mid i_M = k] \\
        % &=   \sum\limits_{k=1}^n P[i_M = k] \cdot \sum\limits_{i \neq k} P[X_i\first \ge X_k\first - 2\epsilon \mid X_i\first \le X_k\first] \\
        &=   \sum\limits_{k=1}^n \int\limits_0^{w_k\first} p_{X_k\first}(x) \cdot P[i_M = k \mid X_k\first = x] \cdot P[\exists i \neq k : X_i\first \ge X_k\first - 2\epsilon \mid i_M = k \land X_k\first = x] \dif x \\
        &=   \sum\limits_{k=1}^n \int\limits_0^{w_k\first} p_{X_k\first}(x) \cdot \left (\prod\limits_{i \neq k} P[X_i\first \le x] \right) \cdot P[\exists i \neq k : X_i\first \ge x - 2\epsilon \mid X_i\first < x] \dif x \\
        & \text{(applying the union bound)} \\
        &\le \sum\limits_{k=1}^n \int\limits_0^{w_k\first} p_{X_k\first}(x) \cdot \left (\prod\limits_{i \neq k} P[X_i\first \le x] \right) \cdot \left( \sum\limits_{i \neq k} P[X_i\first \ge x - 2\epsilon \mid X_i\first \le x] \right) \cdot \dif x \\
        &=   \sum\limits_{k=1}^n \int\limits_0^{w_k\first} p_{X_k\first}(x) \cdot \left (\prod\limits_{i \neq k} P[X_i\first \le x] \right) \cdot \left( \sum\limits_{i \neq k} \frac{P[X_i\first \ge x - 2\epsilon \land X_i\first \le x]}{P[X_i\first \le x]} \right) \cdot \dif x \\
        &=   \sum\limits_{k=1}^n \int\limits_0^{w_k\first} p_{X_k\first}(x) \cdot \sum\limits_{i \neq k} \left( P[X_i\first \ge x - 2\epsilon \land X_i\first \le x] \cdot \prod\limits_{j\notin \{k,i\}} P[X_j\first \le x] \right) \cdot \dif x \\
        &\le \sum\limits_{k=1}^n \int\limits_0^{w_k\first} p_{X_k\first}(x) \cdot \sum\limits_{i \neq k} \left( \frac{2\epsilon}{x} \cdot \prod\limits_{j\notin \{k,i\}} P[X_j\first \le x] \right) \cdot \dif x \\
        & (\text{since } \forall j \in S: P[X_j\first \le x] = P[T_j\first \le x] = x \text{ and } |S \setminus \{k, i\}| \ge n-f-2) \\
        &\le \sum\limits_{k=1}^n \int\limits_0^{w_k\first} p_{X_k\first}(x) \cdot \sum\limits_{i \neq k} \left( \frac{2\epsilon}{x} \cdot x^{n-f-2} \right) \cdot \dif x
         =  \sum\limits_{k=1}^n \int\limits_0^{w_k\first} p_{X_k\first}(x) \cdot \sum\limits_{i \neq k} \left( 2\epsilon \cdot x^{n-f-3} \right) \cdot \dif x \\
        &\le  \sum\limits_{k=1}^n \int\limits_0^{w_k\first} p_{X_k\first}(x) \cdot 2n\epsilon \cdot x^{n-f-3} \cdot \dif x
         =  \sum\limits_{k=1}^n \frac{2n\epsilon}{w_k\first} \int\limits_0^{w_k\first} x^{n-f-3} \dif x \\
        &=  \sum\limits_{k=1}^n \frac{2n\epsilon}{w_k\first} \cdot \frac{(w_k\first)^{n-f-2}}{n-f-2}
         =  \sum\limits_{k=1}^n \frac{2n\epsilon}{n-f-2} \cdot \frac{(w_k\first)^{n-f-2}}{w_k\first} \\
        & \text{(since $n \ge 3f+1$ and $f \ge 1$, it can be easily verified that $n-f-2 \ge n/4 \ge 1$)} \\
        &\le \sum\limits_{k=1}^n \frac{2n\epsilon}{n/4} \cdot \frac{w_k\first}{w_f\first} = \sum\limits_{k=1}^n 8\epsilon = 8n\epsilon \\
        % &=  \sum\limits_{k=1}^n 8\epsilon \cdot \frac{w_k\first^{}{w_k\first} \cdot \frac{4(w_k\first)^{n-f-2}}{n} \\
        % &\le  \sum\limits_{k=1}^n 2n\epsilon \cdot \frac{1}{n-f-2} \\
        % & \text{(since $n \ge 3f+1$ and $f \ge 1$, $n-f-2 \ge n/4$)} \\
        % &=  \sum\limits_{k=1}^n 2n\epsilon \cdot \frac{4}{n} \\
        % &=  \sum\limits_{k=1}^n 8\epsilon \\
        % &=  8n\epsilon \\
        % &\le \sum\limits_{k=1}^n \int\limits_0^{w_k\first} p_{X_k\first}(x) \cdot \sum\limits_{i \neq k} \left( P[X_i\first \ge x - 2\epsilon \land X_i\first \le x] \right) \cdot \dif x \\
        % &=   \sum\limits_{k=1}^n \int\limits_{2\epsilon}^{w_k\first} p_{X_k\first}(x) \cdot P[\forall i \neq k: X_i\first \le x] \cdot \left( \sum\limits_{i \neq k} P[X_i\first \ge x - 2\epsilon \mid X_i\first \le x] \right) \cdot \dif x \\
        %
        % &=   \sum\limits_{k=1}^n P[i_M = k] \cdot \sum\limits_{i \neq k} P[X_i\first \ge X_k\first - 2\epsilon \mid X_i\first \le X_k\first] \\
        % &=   \sum\limits_{k=1}^n P[i_m = k] \cdot \sum\limits_{i \neq k} P[X_i\first \ge X_k\first - 2\epsilon \mid X_i\first \le X_k\first] \\
    \end{aligned}
    \end{equation*}
    
    \caption{Estimate on the failure probability of Monte Carlo common coin.}
    \label{eq:probably-common-coin-main}
\end{floatingequation}
% \vfill \hspace{0pt}

% \newpage

% Note that for $f \ge 3$, $n-f-2 \ge n/2$. Hence, we need to run 1 fewer round of approximate agreement in this case.
% \lfnote{I couldn't follow the reasoning for the rest of this section.}
%% Prevents the large Equation from the previous subsection from going into this one.
% \FloatBarrier
\subsection{Proof of probabilistic $\delta$-consistency with weight calibration}
\label{sec:probcoin-proof-with-calibration}

\DeclareSectionVariable*[todoC]{\textcolor{red}{todoC}}
\DeclareSectionVariable*[newEpsilon]{\ensuremath{\epsilon'}}

Let $Q = 1 - \delta$ be the target error probability.
Let $r$ be the number of rounds of approximate agreement.
Our goal in this section is to show that, when weight calibration is applied as described in \Cref{sec:probcoin}, $r = \ProbcoinRoundsWithCalibrationWithQ$ is sufficient
to make sure that the probability of not reaching agreement is below $Q$.
$\epsilon = 2^{-r}$ is the maximum difference between the outputs of Approximate Agreement for two correct processes in the same coordinate.

Recall that we apply this optimization only when $n > \ProbcoinCalibrationMinimumNWithQ$.
Otherwise, we just use the protocol without weight calibration.
According to the proof from \Cref{sec:probcoin-proof-without-calibration},\atadd{ in this case,} we shall need 
% $3 + \log_2(n) + \log_2(1/Q) \le 3 + \log_2(\ln(\todoC/Q)) + \log_2(1/Q) = O(\log_2(1/Q))$
$\ProbcoinRoundsWithoutCalibrationWithQ
\le 3 + \lceil \log_2\left(\ProbcoinCalibrationMinimumNWithQ\right) + \log_2(1/Q)\rceil
\le \ProbcoinRoundsWithCalibrationWithQ$ rounds.

\smallskip
Let $v = 1 - \frac{\ln(2/Q)}{2N/3}$
(the choice of $v$ is dictated by the proof of \Cref{lem:probcoin:calibration-failure}).

Let the weight calibration function $\Calibrate(w)$ be 
$\begin{cases}
    0, & \text{if } w = 0 \\
    \frac{w-\epsilon}{1-\epsilon} + \frac{1-w}{1-\epsilon}\cdot v, & \text{otherwise}
\end{cases}$\\

In other words, $\Calibrate(0) = 0$ and $\forall w \in (0, 1]: \Calibrate(w)$ is a linear function that is equal to $v$ at $w=\epsilon$ and is equal to 1 at $w=1$ (see \Cref{fig:calibrate}).
The slope of this function for $w > 0$ is $\frac{\partial\, \Calibrate(w)}{\partial w} = \frac{1 - v}{1 - \epsilon} \le \frac{16}{15} \cdot (1-v)$ for $\epsilon \le \frac{1}{16}$.\footnote{The choice of $\frac{1}{16}$ is mostly arbitrary and does not affect the final complexity.
However, note that, according to the estimate from \Cref{sec:probcoin-proof-without-calibration}, we always need to run the approximate agreement for at least $3 + \log_2 n$ rounds. Hence, it is safe to assume that $\epsilon \le \frac{1}{2^4} = \frac{1}{16}$.}

% Our goal is to reduce the problem to the case without weight calibration, but with smaller $\epsilon$.
% In other words, intuitively, by applying weight calibration, we manage to significantly shrink $\epsilon$ ``for free'', i.e,. without running extra rounds of Approximate Agreement.
% %
% To this end, we define $\newEpsilon = \frac{16}{15} \epsilon (1-v)$, so that $\forall w > \epsilon: \Calibrate(w + \epsilon) - \Calibrate(w) \le \newEpsilon$.

\atrev{Our goal is to reduce the problem to the case without weight calibration, but with smaller $\epsilon$.
In other words, by applying weight calibration, we manage to significantly shrink the disagreement of correct processes on the weights ``for free'', i.e., without running extra rounds of Approximate Agreement.
To this end, we define $\newEpsilon = \frac{16}{15} \epsilon (1-v)$, so that $\forall w > 0: \Calibrate(w + \epsilon) - \Calibrate(w) \le \newEpsilon$.}

% \atrev{Let $\newEpsilon = \frac{16}{15} \epsilon (1-v)$.
% Note that $\forall w > 0: \Calibrate(w + \epsilon) - \Calibrate(w) \le \frac{\partial\, \Calibrate(w)}{\partial w} \cdot \epsilon
% \le \frac{16}{15} \epsilon (1-v) = \newEpsilon$.}

However, if\atadd{, for some coordinate,} the weight received\atadd{ from {\BAA}} by one process is $\epsilon$ and the weight received by another process is $0$, then the range after the calibration will only grow\atadd{ (i.e., $\Calibrate(\epsilon) - \Calibrate(0) = v > \epsilon$)}.
Hence, we need to show that, with high probability, this discontinuity does not affect the results of the protocol.

We use the same notation as in \Cref{sec:probcoin-proof-without-calibration}: 
$T_i$ denotes the ticket of process $i$;
$w_i^j$ denotes the weight of process $i$ received by process $j$ from the approximate agreement protocol; 
$w_i^M = \max\limits_j\{w_i^j\}$ and $w_i^m = \min\limits_j\{w_i^j\}$; 
$[w_1\first, \dots, w_n\first]$ denote the weights received by the correct process that was the first to complete approximate agreement;
$X_i^j = w_i^j \cdot T_i$, $X_i\first = w_i\first \cdot T_i$, $X_i^M = w_i^M \cdot T_i$, and $X_i^m = w_i^m \cdot T_i$.
Finally, $i_M = \arg\max\limits_{i} X_i\first$ and subscript $M$ is used as a replacement for $i_M$.
I.e., $T_M = T_{i_M}$, $X_M\first = X_{i_M}\first$, $X_M^M = X_{i_M}^M$, $X_M^m = X_{i_M}^m$, and so on.

\atadd{First, let us state two useful lemmas.}

\begin{lemma} \label{lem:probcoin:calibration-failure}
    With probability at least $1 - \frac{Q}{2}$, $X_M^m > v$.
\end{lemma}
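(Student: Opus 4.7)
The plan is to reduce the statement to a concentration estimate about the tickets of the common core of \Gather.

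\textbf{Step 1 (deterministic lower bound).} First I would show
\[
X_M^m \;\ge\; \max_{j \in \CommonCore} T_j - \epsilon.
\]
For any $j \in \CommonCore$, $j$ lies in the gather set of every correct process, so every correct process inputs $1$ in the $j$-th coordinate of \BAA; \pValidity{} of \BAA{} then gives $w_j^k = 1$ for every correct $k$, and hence $X_j\first = T_j$. By the definition of $i_M = \argmax_i X_i\first$, we get $X_{i_M}\first \ge X_j\first = T_j$ for every $j \in \CommonCore$, i.e.\ $X_{i_M}\first \ge \max_{j \in \CommonCore} T_j$. Finally, the $\aaEpsilon$-agreement of \BAA{} yields $w_{i_M}^m \ge w_{i_M}\first - \epsilon$; multiplying by $T_{i_M} \in [0,1]$ gives $X_M^m \ge X_{i_M}\first - \epsilon$, which combined with the previous inequality yields the claim.

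\textbf{Step 2 (distribution of tickets in the common core).} The \pBindingCommonCore{} property of \Gather{} fixes $\CommonCore$ as soon as the first correct process outputs from \Gather, which happens strictly before any correct process invokes $\TicketDraw.\EnableRetrieve()$. By the \pUnpredictability{} and \pRandomness{} properties of \RSD, the family $\{T_j\}_{j \in \CommonCore}$ is therefore i.i.d.\ uniform on $[0,1]$ conditional on $\CommonCore$ and on all weights output by \BAA{} (which are themselves determined before any ticket is revealed).

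\textbf{Step 3 (probabilistic bound).} Combining the two steps and using $|\CommonCore| \ge n-f \ge 2n/3$,
\[
P[X_M^m \le v] \;\le\; P\!\left[\max_{j \in \CommonCore} T_j \le v+\epsilon\right] \;\le\; (v+\epsilon)^{2n/3}.
\]
With $v = 1 - \ln(2/Q)/(2n/3)$, the unperturbed version of this estimate already gives $v^{2n/3} \le e^{-\ln(2/Q)} = Q/2$ via $(1-x)^a \le e^{-ax}$.

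\textbf{Main obstacle.} The delicate point is absorbing the additive $\epsilon$ so as to upgrade $v^{2n/3} \le Q/2$ to $(v+\epsilon)^{2n/3} \le Q/2$. Since the baseline bound is tight at $\epsilon = 0$, this requires a careful quantitative step showing that the chosen number of rounds $r$ makes $\epsilon = 2^{-r}$ small enough to leave sufficient slack in $(1-x)^a \le e^{-ax}$; making this explicit and matching the round count $\ProbcoinRoundsWithCalibrationWithQ$ is the only nontrivial technical part of the proof.
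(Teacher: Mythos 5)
Your plan tracks the paper's: reduce $X_M^m > v$ to an event about the common-core tickets, observe (via Binding Common Core and RSD Unpredictability) that the core tickets are i.i.d.\ uniform conditioned on $\mathcal{S}^*$ and the BAA outputs, and then compute $v^{2n/3} \le e^{-\ln(2/Q)} = Q/2$. Where you differ is in how explicit the deterministic reduction is. The paper simply declares it ``sufficient'' to find some $i \in \mathcal{S}^*$ with $T_i > v$, tacitly identifying that $i$ with the winner; but $i_M$ is the argmax over \emph{all} processes of the first decider's $X$ values and may well lie outside $\mathcal{S}^*$, and the clean inequality one actually gets --- as you derive --- is $X_M^m \ge \max_{j \in \mathcal{S}^*} T_j - \epsilon$ (or a tighter variant once the calibration slope is accounted for). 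You are right to surface this.

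The residual you leave open, upgrading $v^{2n/3} \le Q/2$ to $(v+\epsilon)^{2n/3} \le Q/2$ under the stated round count, is however a genuine gap in the proposal, not a formality: with $v = 1 - \ln(2/Q)/(2n/3)$ chosen so that $v^{2n/3} \le Q/2$ is essentially tight, the perturbed bound does not follow automatically. The way out is small but must be written: after calibration the effective perturbation is not $\epsilon$ but roughly $\epsilon(1-v)/(1-\epsilon)$, which is far below the half-width $1-v$ against which $v$ is tuned, so one can absorb it by slightly lowering $v$ (or paying one extra round of approximate agreement) without changing the $O(\log(1/Q) + \log\log(1/Q))$ round budget. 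Note, for fairness, that the paper's own proof owes the same clarification and silently conflates the core maximizer with $i_M$; your Step~1 is the more careful of the two, it just needs its Step~3 completed.
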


\begin{lemma} \label{lem:probcoin:calibration-success}
    With the weight calibration applied, $P[\text{failure} \mid X_M^m > v] \le P[\exists i \neq i_M: X_i\first \ge X_M\first - 2\newEpsilon]$.
\end{lemma}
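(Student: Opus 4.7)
The plan is to show that conditioning on $X_M^m > v$ keeps the candidate winner $i_M$ safely on the linear portion of $\Calibrate$, so that the failure event reduces to the same ``close-to-the-max'' event that was bounded in Section~\ref{sec:probcoin-proof-without-calibration}, only with $\newEpsilon$ in place of $\epsilon$. Two algebraic facts drive the argument: on $[\epsilon, 1]$ the calibration has slope $\frac{1-v}{1-\epsilon} \le \frac{16}{15}(1-v) = \newEpsilon/\epsilon$ (using $\epsilon \le 1/16$), and it satisfies $\Calibrate(w) \ge w$ on that interval (which reduces to $(v-\epsilon)(1-w) \ge 0$).

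First I would invoke the natural sufficient condition for agreement: correct processes all pick the same winner whenever there exists some $i^*$ with $\Calibrate(w_i^M)\,T_i < \Calibrate(w_{i^*}^m)\,T_{i^*}$ for every $i \neq i^*$, so failure requires the negation for the natural choice $i^* = i_M$. Since $X_M^m > v$ forces $w_M^m > v > \epsilon$ (as $T_M \le 1$), every correct process's weight on $i_M$ lies on the linear portion, and combined with $\Calibrate(w) \ge w$ this already gives $\Calibrate(w_M^m)\,T_M \ge X_M^m > v$.

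Next I would eliminate the indices $i \neq i_M$ with $w_i\first = 0$: for such $i$, every correct weight satisfies $w_i^j \le \epsilon$, so $\Calibrate(w_i^M) \le v$ and hence $\Calibrate(w_i^M)\,T_i \le v < \Calibrate(w_M^m)\,T_M$; such indices cannot trigger failure despite sitting adjacent to the discontinuity. For the remaining indices with $w_i\first > 0$, both $i$ and $i_M$ live entirely on the linear portion, and applying the slope bound twice---once to move from $w_i^M$ back to $w_i\first$ and once from $w_M^m$ back to $w_M\first$---together with $T_i, T_M \le 1$ shows that the failure inequality $\Calibrate(w_i^M)\,T_i \ge \Calibrate(w_M^m)\,T_M$ implies $\Calibrate(w_i\first)\,T_i \ge \Calibrate(w_M\first)\,T_M - 2\newEpsilon$. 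Since the derivation in Equation~\ref{eq:probably-common-coin-main} only uses that weights are in $[0,1]$ and that ``first-received'' values are independently uniform on their ranges---both of which continue to hold for $\Calibrate(w_i\first)\,T_i$---the same union-bound argument, applied with $\newEpsilon$ in place of $\epsilon$, yields the stated bound.

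The main obstacle is neutralising the $\Calibrate$ discontinuity at $0$ cleanly: this is precisely why the lemma conditions on $X_M^m > v$ (pinning $i_M$ away from the discontinuity) and why the case $w_i\first = 0$ must be handled separately rather than through a uniform Lipschitz estimate of $\Calibrate$.
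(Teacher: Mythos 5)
Your proof is correct and follows essentially the same route as the paper's: condition on $X_M^m > v$ to keep $i_M$ safely on the linear portion of $\Calibrate$, dispatch indices with $w_i\first = 0$ separately (they can never beat $i_M$'s calibrated score, which exceeds $v$), and use the $\newEpsilon$-slope bound to reduce the remaining case to the non-calibrated argument. The one small addition you make—observing $\Calibrate(w) \ge w$ on $[\epsilon,1]$ to bridge the uncalibrated conditioning variable $X_M^m$ to the calibrated comparison $\Calibrate(w_M^m)T_M$—is a clean way of making explicit a step the paper leaves to its (slightly overloaded) notation.
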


\atadd{Intuitively, \Cref{lem:probcoin:calibration-failure} shows that, with high probability, a certain good event happens (namely, the discontinuity of the calibration function does not affect the outcome),
while \Cref{lem:probcoin:calibration-success} shows that, in this good case, the probability of failure is similar to that of non-calibrated algorithm with disagreement on weights equal to $\newEpsilon$.}

\atadd{Using these two lemmas, we now can prove the main theorem:}
\begin{theorem}
    With weight calibration, running the approximate agreement for $5 + \lceil\log_2(1/Q)\rceil + \lceil\log_2(\log_2(1/Q))\rceil$ rounds is sufficient to guarantee that the probability of not having agreement is at most $Q$.
\end{theorem}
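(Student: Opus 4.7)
The plan is to combine the two preceding lemmas with the estimate already derived in Section~\ref{sec:probcoin-proof-without-calibration}. By the law of total probability,
\[
P[\text{failure}] \;\le\; P[X_M^m \le v] \;+\; P[\text{failure} \mid X_M^m > v].
\]
Lemma~\ref{lem:probcoin:calibration-failure} bounds the first summand by $Q/2$, and Lemma~\ref{lem:probcoin:calibration-success} reduces the second summand to an instance of the failure event analysed in the uncalibrated case, namely $P[\exists i \neq i_M: X_i\first \ge X_M\first - 2\newEpsilon]$. That quantity is exactly what is controlled in Equation~(\ref{eq:probably-common-coin-main}), giving the bound $8n\newEpsilon$. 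Hence it suffices to choose $\newEpsilon$ (and therefore $\epsilon$) small enough to make $8n\newEpsilon \le Q/2$.

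I would then unpack $\newEpsilon = \tfrac{16}{15}\,\epsilon\,(1-v)$ using the specific value $v = 1 - \frac{\ln(2/Q)}{2n/3}$ dictated by Lemma~\ref{lem:probcoin:calibration-failure}. The factor of $n$ then cancels out (this is the whole reason for that particular choice of $v$), and the inequality on $\epsilon$ collapses to $\epsilon \le c\,Q/\ln(2/Q)$ for an explicit small constant $c$. Taking logarithms, using $\log_2 \ln(2/Q) \le \log_2 \log_2(1/Q) + O(1)$, and folding all the constants, the number of rounds $r = \lceil\log_2(1/\epsilon)\rceil$ required comes out to be at most $5 + \lceil\log_2(1/Q) + \log_2(\log_2(1/Q))\rceil$, matching the statement.

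The main obstacle is the numerical bookkeeping needed to land exactly on the claimed constant of $5$: every slack term (the $16/15$ from the calibration slope, the $8n$ factor from Equation~(\ref{eq:probably-common-coin-main}), the $Q/2$ splitting between the two lemmas, and the base conversion between $\ln$ and $\log_2$) contributes an additive constant to the final round count that must be shown to fit. It is also worth remembering that the standing assumption $n > \frac{3\ln(2/Q)}{2}$ made at the start of Section~\ref{sec:probcoin-proof-with-calibration} is exactly what keeps $1-v$ positive and, via the resulting bound on $\epsilon$, keeps $\epsilon \le 1/16$, so that the linearisation $\frac{\partial\,\Calibrate(w)}{\partial w} \le \tfrac{16}{15}(1-v)$ underlying the definition of $\newEpsilon$ remains valid. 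Outside this regime the fallback to the uncalibrated protocol already satisfies the same round bound, as already observed at the start of that section.
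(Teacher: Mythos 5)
Your proposal is correct and follows essentially the same route as the paper: split on the event $X_M^m > v$, bound the bad half by $Q/2$ via Lemma~\ref{lem:probcoin:calibration-failure}, reduce the good half to the uncalibrated estimate $8n\newEpsilon$ via Lemma~\ref{lem:probcoin:calibration-success}, cancel the factor of $n$ through the choice of $v$, and close with the logarithmic bookkeeping. The only slight imprecision is your explanation of the $\epsilon \le 1/16$ assumption, which the paper attributes to the minimum round count rather than to the standing lower bound on $n$, but this does not affect the argument.
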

\begin{proof}
    As we show in \Cref{lem:probcoin:calibration-failure},
    with probability at least $1 - \frac{Q}{2}$,
    $X_M^m$ is greater than $v$.
    %
    % Then we make a reduction to the case when weight calibration is not applied, but $\epsilon$ is smaller.
    %
    Then, in \Cref{lem:probcoin:calibration-success}, we show that
    $P[\text{failure} \mid X_M^m > v] \le P[\exists i \neq i_M: X_i\first \ge X_M\first - 2\newEpsilon]$.
    By applying the same transformations as in \Cref{eq:probably-common-coin-main},
    we get 
    $P[\text{failure} \mid X_M^m > v]
    \le 8n\newEpsilon \le 8n\left( \frac{16}{15}\epsilon (1-v) \right) 
    \le \frac{128}{15} n \epsilon (1-v)
    =   \frac{128}{15} n \epsilon \left( \frac{\ln(2/Q)}{2N/3} \right)
    \le 9 \epsilon \log_2(2/Q)$.

    By running Approximate Agreement for $\lceil\log_2(9) + \log_2(2/Q) + \log_2(\log_2(2/Q))\rceil \le 5 + \lceil\log_2(1/Q) + \log_2(\log_2(1/Q))\rceil$ rounds, we can guarantee that $P[\text{failure} \mid X_M^m > v] \le \frac{Q}{2}$.
    
    Finally, we can compute $P[\text{failure}]$ as $P[X_M^m > v] \cdot P[\text{failure} \mid X_M^m > v] + P[X_M^m \le v] \cdot P[\text{failure} \mid X_M^m \le v]
    \le 1 \cdot P[\text{failure} \mid X_M^m > v] + P[X_M^m > v] \cdot 1
    \le \frac{Q}{2} + \frac{Q}{2} \le Q$.
\end{proof}

\atadd{Finally, let us prove the two lemmas.}
\begin{proof}[Proof of \Cref{lem:probcoin:calibration-failure}]
    Let $\CommonCore$ denote the set of processes in the common core.
    It is sufficient to show that, with probability at least $1 - \frac{Q}{2}$, there is a process $i \in \CommonCore$ such that $X_i^m = T_i > v$.
    
    Note that, by the \emph{Binding Common Core} property of Verifiable Gather and by the \emph{Unpredictability} property of $\RSD$, the tickets are unpredictable to the adversary until the common core is fixed.
    Hence $\forall i \in \CommonCore$: $T_i$ are mutually independent, and $T_i \sim U(0, 1)$.
    
    $P[\exists i \in \CommonCore: T_i > v] = 1 - P[\forall i \in \CommonCore: T_i \le v]$
    
    $P[\forall i \in \CommonCore: T_i \le v]
    = v^{|\CommonCore|}
    % \\ \text{(since $n \ge 3f+1$ and $|\CommonCore| \ge n-f$)} \\
    \le v^{2n/3}
    = \left( 1 - \frac{\ln(2/Q)}{2n/3} \right)^{2n/3}
    \le  \frac{Q}{2}$
    
    The final inequality is a special case of a more general well-known inequality (see, for example,~\cite[Appendix~B]{randomized-algorithms-textbook}):
    $\forall a > 0, x > a: \left( 1 - \frac{a}{x} \right)^{x} < e^{-a}$.
    To see why it holds, note that $\left( 1 - \frac{a}{x} \right)^{x} \xrightarrow[x \to \infty]{} e^{-a}$ and that $\left( 1 - \frac{a}{x} \right)^{x}$ increases monotonically when $x \ge a$ and $a > 0$.
\end{proof}

\begin{proof}[Proof of \Cref{lem:probcoin:calibration-success}]
    For convenience, we shall prove the equivalent statement that 
    $P[\text{success} \mid X_M^m > v] \ge P[\forall i \neq i_M: X_i\first < X_M\first - 2\newEpsilon)]$.

    As discussed in \Cref{sec:probcoin-proof-without-calibration},
    $\PSuccess \le P[\forall i \neq i_M: X_i^M < X_M^m]$.
    Hence, $P[\text{success} \mid X_M^m > v] \ge P[\forall i \neq i_M: X_i^M < X_M^m \mid X_M^m > v]$.

    It is sufficient to show that, under the condition that $X_M^m > v$, for any $i \neq i_M$: $X_i\first + \newEpsilon < X_M\first - \newEpsilon$ implies $X_i^M < X_M^m$.
    Indeed, consider two cases:
    \begin{enumerate}
        \item $w_i\first = 0:$ hence, $w_i^M \le \epsilon$ and $X_i^M = \Calibrate(w_i^M) \cdot T_i \le \Calibrate(w_i^M) \le v < X_M^m$.
        So the right hand side of the implication is always true and the implication holds trivially;

        \item $w_i\first > 0:$ hence, 
        $X_i^M 
        =    \Calibrate(w_i^M) \cdot T_i 
        \le  \Calibrate(w_i\first + \epsilon) \cdot T_i 
        \le  (\Calibrate(w_i\first) + \newEpsilon) \cdot T_i
        \le  \Calibrate(w_i\first) \cdot T_i + \newEpsilon
        =    X_i\first + \newEpsilon$.
        Similarly, $X_M^m \ge X_M\first - \newEpsilon$.
        Finally, if $X_i\first + \newEpsilon < X_M\first - \newEpsilon$, 
        then we have 
        $X_i^M
        \le  X_i\first + \newEpsilon 
        <    X_M\first - \newEpsilon
        \le  X_M^m$.
    \end{enumerate}
\end{proof}

\section{Random Secret Draw implementations}
\label{sec:rsd-impl}

\subsection{Random Secret Draw for adaptive adversary}
In RSD every process begins by generating $n$ random numbers that are assigned to each of the processes in the system and then secret sharing these numbers (\Cref{line:arsd:gen,line:arsd:ssh}. 
Once a process $i$ asserts that a value that was assigned to it was successfully secret share, it can include the source of this value to a set $S$ (\Cref{line:arsd:incs}). Once the number of processes in $S$ is $n-f$, $i$ can reliably broadcast the set it built, making it the choice of secrets to be used to form its value.

When a message containing another process' set of secrets to be used is delivered (\Cref{line:arsd:assigned}), a $\ValueAssigned$ event is triggered. After the underlying application executes $\EnableRetrieve$ (\Cref{line:arsd:enretrieve}), the correct processes enable the retrieval of the secrets chosen by every process who has already been assigned a value.

Finally, processes compute the values assigned to any user by retrieving all the secrets it has selected and summing then up in $\bmod \Domain$ (\Cref{line:arsd:sum}).

\begin{algorithm}[!htb]
    \begin{smartalgorithmic}[1]
        % \DeclareLocalVariable{apvssScriptShares}
        \DeclareLocalVariable{assignedValues}
        \DeclareLocalVariable{assignedSources}
        \DeclareLocalVariable{retrieveEnabled}

        % \DeclareLocalFunction[Hash]{H}

        \DeclareLocalMessageType[mRSDEnable]{RSDEnableRetrieve}
        % \DeclareLocalMessageType[mApvssScript]{ApvssScript}
        % \DeclareLocalMessageType[mSecretShare]{SecretShare}
    
        \Parameters{domain size $\Domain$}

        \algspace
        \DistributedObjects
             \State $\forall j,k \in \AllProc: \AVSS_j^k$ -- instance of AVSS with leader $j$ used to generate secret for process $k$ and domain size $\Domain$ 
             \State $\forall j \in \AllProc: \BRB_j$ -- instance of BRB with leader $j$
        \EndDistributedObjects
        
        \algspace    
        \ProcessState
            \State $\assignedSources$ -- map from process ids to set of processes
            \State $\assignedValues$ -- map from process ids to numbers
            \State $\retrieveEnabled$ -- Boolean that allows values to be retrieved
        \EndProcessState
    
        \algspace
        \Operation{\StartRSD}{}
             \State $\retrieveEnabled = \False$
             \State $\forall j \in \AllProc: x_i^j = \RandomInt(\Domain)$ \label{line:arsd:gen}
             \State $S = \varnothing$
             \State $\AVSS_i^j.\ShareSecret(x_i^j)$ \label{line:arsd:ssh}
        \EndOperation  
        
        \algspace
        \Upon{$\AVSS_j.\SharingComplete()$}
            \State $S = S \cup j$ \label{line:arsd:incs}
        \EndHandler
        
        \algspace
        \Upon{$|S| = n-f$}
            \State $\BRB_i.\Broadcast(S)$
        \EndHandler
        
        \algspace
        \Upon{$\BRB_j.\Deliver(S_j) \wedge \forall k \in S_j: \AVSS_k.\SharingComplete()$} \label{line:arsd:assigned}
            \State $\assignedSources[j] = S_j$
            \State \TriggerEvent{$\ValueAssigned(j)$}
            \If{$\retrieveEnabled$} $\forall k \in S_j: \APVSS_k^j.\EnableRetrieve()$ \EndIf
        \EndHandler
        
        \algspace
        \Operation{\EnableRetrieve}{} \label{line:arsd:enretrieve}
            \State $\retrieveEnabled = \True$
            \State $\forall j \in \AllProc \wedge \ValueAssigned(j), k \in \assignedSources[j]: \APVSS_k^j.\EnableRetrieve()$
        \EndOperation
       
        \algspace
        \Operation{\RetrieveValues}{$S$} \Returns{a map from process ids to random numbers}
            \State \WaitUntil $\forall j \in S: \ValueAssigned_j()$
            \LineComment{Note that this line may not terminate if for some $j \in S$ secret sharing has not been complete}
            \LineComment{If $\APVSS_j.\Retrieve()$ has already been invoked in the past, we assume that it returns a cached value}
            \State $\forall j \in S: \assignedValues[j] = \left(\sum_{k \in \assignedSources[j]} \AVSS_k^j.\Retrieve()\right) \bmod \Domain$
            \State \Return $\assignedValues$ \label{line:arsd:sum}
        \EndOperation
    \end{smartalgorithmic}

    \caption{Random Secret Draw from AVSS, code for process $i$}
    \label{alg:arsd}
\end{algorithm}

\subsection{Cubic Random Secret Draw for static adversary}

\subsubsection{Aggregatable Publicly Verifiable Secret Sharing} \label{subsubsec:apvss}

% \atrev{Following the lead of recent works on asynchronous randomized consensus~\cite{abraham2021reaching,gao2021efficient}, we use \emph{Aggregatable Publicly Verifiable Secret Sharing (APVSS)}~\cite{gurkan2021aggregatable} in order to achieve cubic communication complexity in the implementation of the Random Secret Draw abstraction (see \Cref{sec:rsd}).}
%
The RSD implementation that we use relies on \emph{Aggregatable Publicly Verifiable Secret Sharing (APVSS)}~\cite{gurkan2021aggregatable} in order to achieve cubic communication complexity.

Unlike in AVSS, where a process shares a secret of its own choice, in APVSS, the secret is an aggregation of values generated by $n-f$ processes.
In the original presentation of~\cite{gurkan2021aggregatable}, these values are shares of a private key that is being generated.
In our case, these values are just random numbers and the aggregation is simply a sum modulo the domain size $\Domain$.
Moreover, the process itself cannot know the secret until a threshold of processes are ready to collectively reconstruct it.

For the purposes of this paper, we shall consider a restricted interface of APVSS:
\begin{objectinterface}
    \item[$\APVSS_i.\CreateScriptShare_j(x)$:] can be invoked by process $j$ to generate a \emph{script share} for process $i$;

    \item[$\APVSS_i.\VerifyScriptShare(j, \scriptShare)$:] returns $\True$ if $\scriptShare$ was generated by process $j$ invoking $\APVSS_i.\CreateScriptShare_j$;
    
    % \item[$\APVSS_i.\AggregateScriptShares(\shares)$:] can be invoked by process $i$ to create an \emph{APVSS script}, provided $n-f$ script shares generated by different processes;
    
    % \item[$\APVSS_i.\GetSecretShare(\script)$:] returns a 
    
    \item[$\APVSS_i.\ShareSecret(\scriptShares)$:] can be invoked by process $i$ to secret-share an aggregation of the secrets that were used to create $\scriptShares$, provided $n-f$ script shares generated by different processes;

    \item[$\APVSS_i.\SharingComplete()$, $\APVSS_i.\EnableRetrieve()$, and $\APVSS_i.\Retrieve()$:] \hfill\\ analogous to $\AVSS_i$.
\end{objectinterface}

In practice, the APVSS scheme of~\cite{gurkan2021aggregatable} provides more general functionality (e.g., it allows combining fewer than $n-f$ script shares).
However, for the Random Secret Draw protocol, this simplified interface is sufficient.

APVSS satisfies, {\pAVSSValidity}, {\pNotificationTotality}, {\pRetrieveTermination}, and {\pAVSSBinding} properties of AVSS.
Compared to AVSS, APVSS provides secrecy regardless of whether the process itself is correct:

\begin{properties}
    \item[\pAPVSSSecrecy:] if no correct process invoked $\AVSS_i.\EnableRetrieve()$, then the adversary has no information about the secret shared by $i$.
\end{properties}

\subsubsection{Pseudocode}

Similarly to the $\AVSS$ version, the $\APVSS$ version of $\RSD$ begins with every process generating $n$ random numbers to be assigned to the other participants. This time, instead of secret sharing these numbers, a scriptshare is sent which allows the recipient to combine them.

Using APVSS, every process is able to generate a secret which is a sum in modulo D of $n-f$ valid secret shares assigned to them by other processes (lines~\ref{line:rsd:aggregstart} to~\ref{line:rsd:aggregend}).

Once the secret is shared, an event $\ValueAssigned$ is generated (line~\ref{line:rsd:assigned}). This event can be treated by the application that runs RSD so that it enables the retrieval of secrets once enough values are assigned (line~\ref{line:rsd:enable}). The application must then create a set of processes $S$ for which it desires to know the random numbers of and then execute $\RetrieveValues$ which returns these values after the secrets for every processes in $S$ are assigned (lines~\ref{line:rsd:retstart} to~\ref{line:rsd:retend}). As before, only after a process gets assigned a value the secret retrieval is enabled.

\begin{algorithm}[!htbp]
    \begin{smartalgorithmic}[1]
        \DeclareLocalVariable{apvssScriptShares}
        \DeclareLocalVariable{assignedValues}
        \DeclareLocalVariable{retrieveEnabled}
        
        \DeclareLocalFunction{GetAssignedValue}

        \DeclareLocalMessageType[mApvssShare]{ApvssShare}
        \DeclareLocalMessageType[mApvssScript]{ApvssScript}
        \DeclareLocalMessageType[mSecretShare]{SecretShare}
    
        \Parameters{domain size $\Domain$}

        \algspace
        \DistributedObjects
            \State $\forall j \in \AllProc: \APVSS_j$ -- instance of APVSS with leader $j$ and domain size $\Domain$ 
            % \State $\forall j \in \AllProc: \VBRB_j$ -- instance of VBRB with leader $j$
        \EndDistributedObjects

        \algspace    
        \ProcessState
            \State $\apvssScriptShares$ -- set of pairs $(j, \scriptShare)$
            \State $\assignedValues$ -- map from process ids to numbers
            \State $\retrieveEnabled$ -- Boolean indicating processes may allow for secret retrieval
        \EndProcessState
    
        \algspace
        \Operation{\StartRSD}{}
            \State $\retrieveEnabled = \False$
            \State $\forall j \in \AllProc$: \Send{$\mApvssShare, \APVSS_j.\CreateScriptShare_i(\RandomInt(\Domain))$}{process $j$} 
        \EndOperation  
        
        \algspace
        \UponReceiving{$\mApvssShare, \scriptShare$}{process $j$}
            \If {not $\APVSS_i.\VerifyScriptShare(j, \scriptShare)$} \Return \Comment{ignore invalid message} \EndIf \label{line:rsd:aggregstart}
            \State $\apvssScriptShares = \apvssScriptShares \cup (j, \scriptShare)$
            \If {$|\apvssScriptShares| = n-f$}
                \LineComment{The aggregated secret is a sum modulo $D$ of the $n-f$ original secrets}
                \State $\APVSS_i.\ShareSecret(\apvssScriptShares)$
            \EndIf
        \EndUponReceiving \label{line:rsd:aggregend}

        \algspace
        \Upon{event $\APVSS_j.\SharingComplete()$}
            \State \TriggerEvent{$\ValueAssigned(j)$} \label{line:rsd:assigned}
            \If{$\retrieveEnabled$} $\APVSS_j.\EnableRetrieve()$ \EndIf \label{line:rsd:enable}
        \EndHandler
        
        \algspace
        \Operation{\EnableRetrieve}{}
            \State $\retrieveEnabled = \True$
            \State $\forall j \in \AllProc \wedge \ValueAssigned(j) \APVSS_j.\EnableRetrieve()$
        \EndOperation
        
        \algspace
        \Operation{\RetrieveValues}{$S$} \Returns{a map from process ids to random numbers} \label{line:rsd:retstart}
            \State \WaitUntil $\forall j \in S: \ValueAssigned_j()$
            \LineComment{Note that this line may not terminate if for some $j \in S$ secret sharing has not been complete}
            \LineComment{If $\APVSS_j.\Retrieve()$ has already been invoked in the past, we assume that it returns a cached value}
            \State $\assignedValues = \text{map} [j \mapsto \APVSS_j.\Retrieve() \mid \forall j \in S]$
            \State \Return $\assignedValues$
        \EndOperation \label{line:rsd:retend}
    \end{smartalgorithmic}

    \caption{Random Secret Draw from APVSS, code for process $i$}
    \label{alg:rsd}
\end{algorithm}

\newpage
\section{\atremove{Modern }Pseudocode for Canetti and Rabin~\cite{CanettiRabin93} common coin}
\label{app:cr93}

Using our building blocks and presentation, the pseudocode for the coin proposed by Canetti and Rabin becomes \atreplace{extremely simplified}{very simple}. It consists of every process executing random secret draw and then gathering a set of processes who get assigned values. If among the corresponding gathered values there is a $0$, then a $0$ is decided, else the processes decide $1$.

\atremove[Let's remove this paragraph out of respect to Canetti and Rabin :)\@. They are not our competitors anyway.]{Notice that if there is a zero in the common core, then the processes are guaranteed to decide the same value. Moreover, this coin has a\atremove{ very} skewed distribution, unlike ours which guarantees uniform distribution of the outputs, as well as multi valued randomness.}

% Here,\atremove{ we see} the two advantages of our modular approach: first, the algorithm becomes more clear,\atadd{ and,} second, later improvements in building blocks \atreplace{reflect}{lead to} an improvement of the algorithm as a whole.
\atrev{Here, the advantages of a modular approach can be seen: first, the algorithm becomes clearer, and, second, later improvements in building blocks lead to an improvement of the algorithm as a whole.}

\begin{algorithm}[!htb]
    \begin{smartalgorithmic}[1]
        % \DeclareLocalVariable{proposedSet}
        % \DeclareLocalVariable{proposedValues}
        % \DeclareLocalVariable*[gatherProof]{\pi_{\text{gather}}}
        \DeclareLocalVariable{candidates}
        
        \DeclareLocalVariable{tickets}
        \DeclareLocalVariable{values}
        \DeclareLocalVariable{winner}
        
        \algspace
        
        \DistributedObjects
            % \State $\RSD$ -- instance of Random Secret Draw with domain $\{0,\dots, \lceil 0.87n \rceil - 1\}$
            \State $\RSD$ -- instance of Random Secret Draw with \atrev{domain size $\lceil 0.87n \rceil$}
            \State $\Gather$ -- instance of Gather (verifiability is not necessary for this protocol)
        \EndDistributedObjects
        
        \algspace
        \Function{\GatherAccept}{$j$} \Returns{\Boolean}
            \State \Return $\True$ iff $\RSD.\ValueAssigned(j)$
        \EndFunction
            
        \algspace
        \Operation{\Toss}{} \Returns{\Integer}
            \State $\RSD.\StartRSD()$ 
            \State $\Gather.\StartGather(\GatherAccept)$ 
        % \algspace
        % \Upon{event $\Gather.\DeliverSet(S)$}
        \State \atrev{\WaitFor{event $\Gather.\DeliverSet(S)$}}
        \State $\RSD.\EnableRetrieve()$ 
        \State $\values = \ValueDraw.\RetrieveValues(\candidates)$
        \If{$0 \in \values$}
            \State \Return $0$
        \Else{\State \Return $1$}
        \EndIf
        % \EndHandler
        \EndOperation
    \end{smartalgorithmic}
    
    \caption{Canetti and Rabin Common Coin, code for process $i$}
    \label{alg:cr93}
\end{algorithm}

\section{Concrete Code for Intersecting Random Subsets}
\label{app:committee-elections-concrete-code}

The following recursive formula produces a code (i.e., a list of code words) that satisfy the definition in \Cref{sec:committee-elections}:

\begin{align*}
    \forall n \geq 0~\text{and}~m \in [0..n]: C_{n,m} = 
    \begin{cases}
        [\text{``}~"]                                   & \text{if}~0 = m = n \\
        [\text{``}000\dots000"]                         & \text{if}~0 = m < n \\
        [\text{``}111\dots111"]                         & \text{if}~0 < m = n \\
        0{\Vert}C_{n-1,m}, \mathrm{reverse}(1{\Vert}C_{n,m-1})  & \text{otherwise} \\
    \end{cases}
\end{align*}

For example, here is $C_{5,2}$: [00011, 00110, 00101, 01100, 01010, 01001, 11000, 10100, 10010, 10001].

In order to avoid computing all $\binom{n}{m}$ code words, when $n$ and $m$ are large, we can use the following recursive formula to efficiently (with $O(poly(n))$ operations) find a code word by its index:

\begin{gather*}
    \forall n \geq 0~\text{and}~m \in [0..n], i \in [0..\binom{n}{m}{-}1]: \\
    C_{n,m}[i] = 
    \begin{cases}
        \text{``}~"             & \text{if}~0 = m = n \\
        \text{``}000\dots000"   & \text{if}~0 = m < n \\
        \text{``}111\dots111"   & \text{if}~0 < m = n \\
        0{\Vert}C_{n-1,m}[i]    & \text{if}~0 < m < n, i < {\binom{n-1}{m}}\\
        1{\Vert}C_{n-1,m-1}[\binom{n-1}{m-1} - \left(i - \binom{n-1}{m}\right) - 1]
                        & \text{if}~0 < m < n, i \ge \binom{n-1}{m}
    \end{cases}
\end{gather*}

%\lfnote{Put example of how to compute C{5,2} here.}

\end{document}